\newcommand*\patchAmsMathEnvironmentForLineno[1]{%
 \expandafter\let\csname old#1\expandafter\endcsname\csname #1\endcsname
 \expandafter\let\csname oldend#1\expandafter\endcsname\csname end#1\endcsname
 \renewenvironment{#1}%
    {\linenomath\csname old#1\endcsname}%
    {\csname oldend#1\endcsname\endlinenomath}}%
\newcommand*\patchBothAmsMathEnvironmentsForLineno[1]{%
 \patchAmsMathEnvironmentForLineno{#1}%
 \patchAmsMathEnvironmentForLineno{#1*}}%
\DeclareMathOperator{\tw}{tw}
\DeclareMathOperator{\lca}{lca}
\title{\MakeUppercase{An Optimal Algorithm for Product Structure in Planar Graphs}\thanks{This research was partly funded by NSERC.}}
\author{%
  Prosenjit Bose\thanks{School of Computer Science, Carleton University}\qquad
  Pat Morin\footnotemark[1]\qquad
  Saeed Odak\thanks{Department of Computer Science and Electrical Engineering, University of Ottawa}}
\date{}
\begin{document}

\maketitle

\begin{abstract}
  The \emph{Product Structure Theorem} for planar graphs (Dujmović et al.\ \emph{JACM}, \textbf{67}(4):22) states that any planar graph is contained in the strong product of a planar $3$-tree, a path, and a $3$-cycle.  We give a simple linear-time algorithm for finding this decomposition as well as several related decompositions.  This improves on the previous $O(n\log n)$ time algorithm (Morin.\ \emph{Algorithmica}, \textbf{85}(5):1544--1558).
\end{abstract}

%

\section{Introduction}

For two graphs $G$ and $X$, the notation $G\subseteq X$ denotes that $G$ is isomorphic to some subgraph of $X$.
The following \emph{planar product structure theorems} have recently been used as a key tool in resolving a number of longstanding open problems on planar graphs, including queue number \cite{dujmovic.joret.ea:planar}, nonrepetitive chromatic number \cite{dujmovic.esperet.ea:planar}, adjacency labelling \cite{dujmovic.esperet.ea:adjacency}, universal graphs \cite{esperet.joret.ea:sparse}, $p$-centered colouring \cite{debski.felsner.ea:improved}, and vertex ranking \cite{bose.dujmovic.ea:asymptotically}.\footnote{In this paper, we will not be working directly with treewidth or the strong graph product ($\boxtimes$), so we omit their definitions.}

\begin{thm}[\citet{dujmovic.joret.ea:planar, ueckerdt.wood.ea:improved}]\label{meta}
  For any planar graph $G$, there exists:
  \begin{compactenum}[(a)]
    \item \label{three_tree} a planar graph $H$ of treewidth at most $3$ and a path $P$ such that $G\subseteq H\boxtimes P\boxtimes K_3$ \cite{dujmovic.joret.ea:planar};
    \item \label{four_tree} a planar graph $H$ of treewidth at most $4$ and a path $P$ such that $G\subseteq H\boxtimes P\boxtimes K_2$; and
    \item \label{six_tree} a planar graph $H$ of treewidth at most $6$ and a path $P$ such that $G\subseteq H\boxtimes P$ \cite{ueckerdt.wood.ea:improved}.
  \end{compactenum}
\end{thm}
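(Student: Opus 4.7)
The theorem collects three closely related product-structure results from the literature: part (a) is the main theorem of \citet{dujmovic.joret.ea:planar}, part (c) is the main theorem of \citet{ueckerdt.wood.ea:improved}, and part (b) is an intermediate statement that follows from a minor variant of the same construction. My plan is therefore not to reprove anything from scratch, but to recall the common template underlying these results and indicate how the three cases differ.

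The template proceeds as follows. First, add edges to make $G$ a planar triangulation $G^+$. Second, compute a BFS tree $T$ of $G^+$ rooted at an arbitrary vertex; this yields a layering of $V(G^+)$ by distance from the root, which will play the role of the path $P$ in the product. Third, construct a partition $\mathcal{P}$ of $V(G^+)$ into connected pieces and take $H$ to be the quotient $G^+/\mathcal{P}$. The map sending each vertex $v$ to the triple (part of $\mathcal{P}$ containing $v$, BFS layer of $v$, local label) then embeds $G$ into $H \boxtimes P \boxtimes K_k$, where the value of $k$ is controlled by how many vertices of a single piece can land in a single layer.

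The three parts of the theorem correspond to three choices of $\mathcal{P}$. For (a), each piece is a single \emph{vertical path} of $T$ obtained from a recursive tripod decomposition of the triangulated outer face; the quotient is verified to be a planar $3$-tree, and the $K_3$ factor records which of the (at most three) coincident path-layer positions a vertex occupies. For (c), pieces are coarsened by merging all three legs of each tripod into a single part, eliminating the need for a $K_3$ factor at the cost of raising the treewidth of the quotient to $6$. Part (b) corresponds to an intermediate coarsening that folds two of the three legs into $H$ and leaves the third to be distinguished by a $K_2$ factor, yielding treewidth $4$.

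The main obstacle in all three regimes is verifying that the chosen partition has a quotient with the claimed small treewidth (and in case (a) is actually a planar $3$-tree); once this structural fact is in hand, defining the homomorphism into the strong product is routine. Since this is precisely what the two cited papers establish, I would invoke their theorems directly for (a) and (c) and carry out the small modification for (b).
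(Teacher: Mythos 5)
Your overall plan --- to invoke the theorems of the two cited papers directly for (a) and (c) and treat (b) as a variant --- is consistent with the paper, which states this result as a citation rather than reproving it. However, your account of how the three parts differ is exactly backwards, and since this is the only substantive content in your proposal, it is a genuine error. In the actual constructions, part (a) uses the \emph{coarsest} partition: each part is an entire tripod (the union of three upward paths in the BFS tree $T$ joined at a crotch face), the quotient $H=G/\mathcal{Y}$ is shown to have treewidth $3$, and the $K_3$ factor is needed precisely because a tripod has three legs, so a single part can contain up to three vertices in one BFS layer. Part (c) uses the \emph{finest} partition: each tripod is \emph{split} into its three legs, so every part is a single vertical path with at most one vertex per layer and no clique factor is required --- but the quotient of this refined partition has treewidth $6$. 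You stated the opposite (single vertical paths for (a) with a $K_3$ to "disambiguate path-layer positions," and "merging the three legs into one part" for (c) with no $K_3$); that version is internally inconsistent, because a partition into vertical paths never needs a $K_3$, and a partition into merged tripods always does.

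Your description of (b) is similarly off: the intermediate object is a \emph{bipod} (two upward paths emanating from an edge rather than a face, built from a sequence of edges $e_0,\ldots,e_k$, not of faces), each face of the intermediate graph meets at most four bipods so the quotient has treewidth $4$, and the $K_2$ distinguishes the two legs of the bipod containing a given vertex --- not "the third leg" as you wrote. Also worth noting: the paper attributes part (b) to an unpublished communication of Vida Dujmović, so it is a genuinely new variant rather than something implicit in the two cited papers. The correct mental model is a monotone trade-off: coarser parts give smaller quotient treewidth but require a larger $K_k$ factor (tripods: treewidth $3$, $K_3$; bipods: treewidth $4$, $K_2$; monopods/legs: treewidth $6$, no factor).
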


In each of the applications of \cref{meta}, the proofs are constructive and lead to algorithms whose running-time is dominated by the time required to compute the relevant decomposition.  The proofs of each part of \cref{meta} are constructive and lead to $O(n^2)$ time algorithms as observed already by \citet{dujmovic.joret.ea:planar}.  \citet{morin:fast} later showed that there exists an $O(n\log n)$ time algorithm to find the decomposition in \cref{meta}.\ref{three_tree}.  In the current note, we show that there exists a linear time algorithm for finding each of the three decompositions guaranteed by \cref{meta}.  This immediately gives an $O(n)$-time algorithm for each of the following problems on any $n$-vertex planar graph $G$:
\begin{compactitem}
  \item computing an $O(1)$-queue layout of $G$ \cite{dujmovic.joret.ea:planar};
  \item nonrepetitively vertex-colouring $G$ with $O(1)$ colours \cite{dujmovic.esperet.ea:planar};
  \item assigning $(1+o(1))\log n$-bit labels to the vertices of $G$ so that one can determine from the labels of vertices $v$ and $w$ whether or not $v$ and $w$ are adjacent in $G$ \cite{dujmovic.esperet.ea:adjacency};
  \item mapping the vertices of $G$ into a universal graph $U_n$ that has $n^{1+o(1)}$ vertices and edges so that any pair of vertices that are adjacent in $G$ maps to a pair of vertices that are adjacent in $U_n$ \cite{esperet.joret.ea:sparse};
  \item colouring the vertices of $G$ with $O(p^3\log p)$ colours so that each connected subgraph $H$ of $G$ contains a vertex whose colour is unique in $H$ or contains vertices of at least $p+1$ different colours \cite{debski.felsner.ea:improved}; and
  \item colouring the vertices of $G$ with $O(\log n/\log\log\log n)$ integers so that the maximum colour that appears on any path $P$ of length at most $\ell$ appears at exactly one vertex of $P$ (for any fixed $\ell\ge 2$) \cite{bose.dujmovic.ea:asymptotically}.
\end{compactitem}

The remainder of this paper is organized as follows: \cref{prelims} presents some necessary background and notation.  \cref{tripod_decompositions} reviews the proof of \cref{meta}.\ref{three_tree}.  \cref{algorithm} presents the linear time algorithm for finding the decomposition in \cref{meta}.\ref{three_tree}.  \cref{variants} describes the algorithms for finding the decompositions in \cref{meta}.\ref{four_tree} and \cref{meta}.\ref{six_tree}.

\section{Preliminaries}
\label{prelims}

Throughout this paper we use standard graph theory terminology as used in the textbook by Diestel \cite{diestel:graph}.  All graphs discussed here are simple and finite.  For a graph $G$, $V(G)$ and $E(G)$ denote the vertex and edge sets of $G$, respectively.  We use the terms \emph{vertex} and \emph{node} interchangeably, though we typically refer to the vertices of some primary graph $G$ of interest and refer to the nodes of some auxilliary graph (such as a spanning tree) related to $G$.  We say that a subgraph $G'$ of a graph $G$ \emph{spans} a set $S\subseteq V(G)$ if $S\subseteq V(G')$.

\paragraph*{Quotient Graphs.}

Given a graph $G$ and a partition $\mathcal{P}$ of $V(G)$, the \emph{quotient graph} $G/\mathcal{P}$ is the graph with vertex set $V(G/\mathcal{P}):=\mathcal{P}$ and in which two nodes $X,Y\in V(G/\mathcal{P})$ are adjacent if $G$ contains at least one edge $xy$ with $x\in X$ and $y\in Y$.

\paragraph*{Embeddings, Planar Graphs, and (Near-)Triangulations.}

An \emph{embedding} $\psi$ of a graph $G$ associates each vertex $v$ of $G$ with a point $\psi(v)\in \R^2$ and each edge $vw$ of $G$ with a simple open curve $\psi(vw):(0,1)\to\R^2$ whose endpoints\footnote{The \emph{endpoints} of an open curve $\psi:(0,1)\to\R^2$ are the two points $\lim_{\epsilon\downarrow 0} \psi(\epsilon)$ and $\lim_{\epsilon\downarrow 0}\psi(1-\epsilon)$.} are $\psi(v)$ and $\psi(w)$.  We do not distinguish between such a curve $\psi(vw)$ and the point set $\{\psi(vw)(t):0<t<1\}$.
We let $\psi(V(G)):=\{\psi(v):v\in V(G)\}$, $\psi(E(G)):=\bigcup_{vw\in E(G)} \psi(vw)$, and $\psi(G):=\psi(V(G))\cup\psi(E(G))$.  An embedding $\psi$ of $G$ is \emph{plane} if $\psi(vw)\cap\psi(V(G))=\emptyset$ and $\psi(vw)\cap\psi(xy)=\emptyset$ for each distinct pair of edges $vw,xy\in E(G)$.  A graph $G$ is \emph{planar} if it has a plane embedding. A \emph{triangulation} is an edge-maximal planar graph.

If $\psi$ is a plane embedding of a planar graph $G$, then we call the pair $(G,\psi)$ an \emph{embedded graph} and we will not distinguish between a vertex $v$ of $G$ and the point $\psi(v)$ or between an edge $vw$ of $G$ and the curve $\psi(vw)$.  Similarly, we will not distinguish between $G$ and the  point set $\psi(G)$.  Any cycle in an embedded graph defines a Jordan curve. For such a cycle $C$, $\R^2\setminus C$ has two components, one bounded and the other unbounded. We will refer to the bounded component as the \emph{interior} of $C$ and the unbounded component as the \emph{exterior} of $C$.  If $G$ is an embedded triangulation, then the subgraph of $G$ consisting of all edges and vertices of $G$ contained in the closure of the interior of $C$ is called a \emph{near-triangulation}.

Each component of $\R^2\setminus G$ is a \emph{face} of $G$ and we let $F(G)$ denote the set of faces of $G$.  If $G$ is $2$-connected then, for any face $f\in F(G)$, the set of vertices and edges of $G$ contained in the boundary of $f$ forms a cycle.  We may therefore treat a face $f$ of a $2$-connected graph $G$ as a component of $\R^2\setminus G$ or as the cycle of $G$ on the boundary of $f$, relying on context to distinguish between the two usages.  Note that every embedded graph contains exactly one face---the \emph{outer face}---that is unbounded.

\paragraph*{Duals and Cotrees.}

The \emph{dual} $G^\star$ of an embedded graph $G$ is the graph with vertex set $V(G^\star):=F(G)$ and edge set $E(G^{\star}):=\{fg\in \binom{F(G)}{2}:E(f)\cap E(g)\neq\emptyset\}$.\footnote{For a set $S$, $\binom{S}{2}$ denotes the $\binom{|S|}{2}$-element set $\binom{S}{2}:=\{\{x,y\}:x,y\in S, x\neq y\}$.} If $T$ is a spanning tree of $G$ then the \emph{cotree} $\overline{T}$ of $(G,T)$ is the graph with vertex set $V(\overline{T}) := V(G^\star)$ and edge set $E(\overline{T}) := \{ab \in E(G^\star) : E(a) \cap \E(b) \backslash E(T) \neq \emptyset \}$. It is well known that, if $G$ is connected, then $\overline{T}$ is a spanning tree of $G^\star$.

For our purposes, a \emph{binary tree} is a rooted tree of maximum degree $3$ whose root has degree at most $2$ and in which each child $v$ of a node $u$ is either the unique \emph{left child} or the unique \emph{right child} of $u$.  If $G$ is a triangulation  and we root $\overline{T}$ at any face $f_0\in F(G)$ that contains an edge of $T$, then $\overline{T}$ is a binary tree, with the classification of left and right children determined by the embedding of $G$.\footnote{There is a small ambiguity here when $T$ contains two edges of $f_0$, in which case the unique child of $f_0$ in $\overline{T}$ can be treated as the left or right child of $f_0$.}

\paragraph*{Paths and Distances.}

A \emph{path} in $G$ is a (possibly empty) sequence of vertices $v_0,\ldots,v_r$ with the property that $v_{i-1}v_i\in E(G)$, for each $i\in\{1,\ldots,r\}$.  The \emph{endpoints} of a path $v_0,\ldots,v_r$ are the vertices $v_0$ and $v_r$.
The \emph{length} of a non-empty path $v_0,\ldots,v_r$ is the number, $r$, of edges in the path.
%

\paragraph*{Trees, Depth, Ancestors, and Descendants.}

Let $T$ be a tree rooted at a vertex $v_0\in V(T)$.  For any vertex $w\in V(T)$, $P_T(w)$ denotes the path in $T$ from $w$ to $v_0$.  For any $w_0\in V(T)$, any prefix $w_0,\ldots,w_r$ of $P_T(w_0)$ is called an \emph{upward path} in $T$; $w_0$ is the \emph{lower endpoint} of this path and $w_r$ is the \emph{upper endpoint}.  The \emph{$T$-depth} of a node $w\in V(T)$ is the length of the path $P_T(w)$. The second node in $P_T(v)$ (if any) is the \emph{$T$-parent} of $v$.  A vertex $a\in V(T)$ is a \emph{$T$-ancestor} of $w\in V(T)$ if $a\in V(P_T(w))$. If $a$ is a $T$-ancestor of $w$ then $w$ is a \emph{$T$-descendant} of $a$.

\paragraph*{Lowest Common Ancestors.}

For any two vertices $v,w\in V(T)$, the \emph{lowest common ancestor} $\lca_T(v,w)$ of $v$ and $w$ is the node $a$ in $P_T(v)\cap P_T(w)$ having maximum $T$-depth.  The \emph{lowest commmon ancestor problem} is a well-studied data structuring problem that asks to preprocess a given $n$-vertex rooted tree so that one can quickly return $\lca_T(v,w)$ for any two nodes $v,w\in V(T)$. A number of optimal solutions to this problem exist that, after $O(n)$ time preprocessing using $O(n)$ space, can answer queries in $O(1)$ time \cite{berman.vishkin:recursive,shieber.vishkin:on,harel.tarjan:fast,alstrup.gavoille.ea:nearest,bender.farach-colton:lca,fischer.heun:theoretical}.  The most recent work in this area includes simple and practical data structures that achieve this optimal performance \cite{alstrup.gavoille.ea:nearest,bender.farach-colton:lca,fischer.heun:theoretical}.

\paragraph*{Reconstructing Binary Tree Models.}

Let $T$ be a binary tree and $S \subseteq V(T)$. An upward path $v_0,\ldots,v_r$ in a binary tree $T$ is \emph{S-non-branching} if $v_{i}$ has degree 2 and $v_i \notin S$ for each $i\in\{1,\ldots,r-1\}$. For any binary tree $T$ and set $S \subseteq V(T)$, the \emph{model} $T'$ of $T$ with respect to $S$ is the binary tree obtained by replacing each maximal $S$-non-branching path $v_0,\ldots,v_r$ with the edge $v_0v_r$; if $v_{r-1}$ is the left (respectively, right) child of $v_r$ then $v_0$ becomes the left (respectively, right) child of $v_r$.


\begin{lem}\label{reconstruction}
  Let $T$ be a binary tree, let $S=\{x_1,\ldots,x_d\} \subseteq V(T)$, and let $T_0$ be the minimal subtree of $T$ that spans $S$. Then there exists an algorithm that, given an $O(1)$-query time lowest common ancestor data structure for $T$,  computes the model $T_0'$ of $T_0$ with respect to $S$ in $O(d^2)$ time.
\end{lem}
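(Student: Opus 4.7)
The plan is to build $T_0'$ incrementally, inserting $x_1, \ldots, x_d$ one at a time. For $1 \le i \le d$, let $U_i$ denote the minimal subtree of $T$ spanning $S_i := \{x_1, \ldots, x_i\}$ and let $U_i'$ be its model with respect to $S_i$; I will compute $U_1', U_2', \ldots, U_d'$ in sequence (noting $U_d' = T_0'$), spending $O(i)$ time at step $i$ for a total of $O(d^2)$. The base case $U_1'$ is the single vertex $x_1$.

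To insert $y := x_i$ into $U_{i-1}'$, first compute $a := \lca_T(y, r)$, where $r$ is the current root. If $a \neq r$, then $a$ is an ancestor of $r$ in $T$ lying strictly above $r$ (possibly $a = y$), and I make $a$ the new root with $y$ and (the subtree rooted at) $r$ as its two children. If $a = r$, then $y$ is a $T$-descendant of $r$, and I walk down in the current model from $r$, maintaining the invariant that the current vertex $u$ is a $T$-ancestor of $y$. At each such $u$, I examine the at most two model-children of $u$: for each child $c$, compute $a' := \lca_T(y, c)$ and distinguish three cases. (i) If $a' = c$, then $y$ is a $T$-descendant of $c$ and I recurse at $c$. (ii) If $a'$ lies strictly between $u$ and $c$ on the $u$-to-$c$ path in $T$ (detected by $\lca_T(a', u) = u$ together with $a' \neq u$ and $a' \neq c$), then $a'$ is a new branching vertex subdividing the model edge $uc$, and $y$ attaches as a new child of $a'$. (iii) If $a' = u$, then the $T$-path from $u$ to $y$ leaves $u$ in a direction different from $c$. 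If case (i) or (ii) triggers for some child we stop; otherwise $y$ attaches directly as a new child of $u$ in the remaining $T$-child direction. The left/right labelling of each new edge is recovered from a constant number of further LCA queries together with direct inspection of the (at most two) $T$-children of the relevant vertex.

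Correctness rests on the fact that every branching vertex of $U_i$ is either an element of $S_i$ or the $T$-LCA of two elements of $S_i$, so all vertices of $U_i'$ are revealed through the pairwise LCAs encountered during the insertions; a short induction then shows that the walk-down places $y$ at the correct attachment point and introduces new Steiner vertices exactly where required. For the running time, the walk-down at step $i$ traces a single root-to-attachment path of length at most $|V(U_{i-1}')| = O(i)$; each visited vertex incurs $O(1)$ LCA queries, and all ancestry tests reduce to further LCA queries. Summing over $i$ yields $O(d^2)$ total.

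The main obstacle is handling the walk-down case analysis cleanly: separating descent, subdivision of an existing model edge by a new Steiner vertex, and attachment in a fresh direction from $u$. The binarity of $T$ keeps this bounded: from any $u$ the $T$-path to $y$ must leave $u$ along exactly one of at most two $T$-child directions, and matching this direction against the at most two current model-children of $u$ realises precisely the three cases above.
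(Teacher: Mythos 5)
Your proposal is correct in spirit and achieves the stated $O(d^2)$ bound, but it takes a genuinely different route from the paper. The paper's proof is a top-down divide-and-conquer recursion on $S$: it computes the root $r$ of $T_0$ by folding $\lca$ over all of $S$ in $O(d)$ time, then (when $r$ has two children) uses the observation that $\lca_T(r_1,x)$ equals either $r_1$ or $r$ to partition $S\setminus\{r\}$ into the descendants of the left and right children, and recurses on the two parts, yielding the recurrence $T(d)\le O(d)+T(d_1)+T(d_2)$ with $d_1+d_2\le d$, $d_1,d_2\le d-1$. Your approach is incremental: insert the $x_i$ one at a time, each time walking down the current model from the root and using $\lca$ queries to decide whether to descend into an existing child, subdivide an existing model edge at a fresh Steiner vertex, or attach in a new child direction. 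The paper's recursion is arguably tidier to analyse (the running time falls directly out of a standard recurrence, and in balanced cases it would in fact run in $O(d\log d)$), whereas your incremental construction is more robust in an online setting where $S$ arrives element by element, at the cost of a more delicate walk-down case analysis.

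One small inaccuracy to fix: in the step where $a:=\lca_T(y,r)\neq r$, you write that you make $a$ the new root with $y$ and $r$ as its two children, noting that possibly $a=y$. When $a=y$ this creates a self-loop; the correct action in that sub-case is to make $y$ the new root with $r$ as its single child (the path from $y$ to $r$ in $T$ contains no vertex of $S_{i-1}$ above $r$ and no branching, so it collapses to a single model edge). With that correction, and with the observation you already make implicitly (from any internal $u$ with two model-children, the two children depart along the two distinct $T$-child directions, so exactly one of them falls into case (i) or (ii)), the walk-down is well-defined and the argument goes through.
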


\begin{proof}
  The proof is by induction on $|S|$.  The base case $|S|=1$ is trivial, since then $T_0'=T_0$ is the tree with one node, which is the unique element in $S$.

  If $|S|\ge 2$, then the first step is to determine the root $r$ of $T_0$, which must also be the root of $T_0'$.  This is easily done by first setting $r:=x_1$ and then repeatedly setting $r:=\lca_{T}(r,x_i)$ for each $i\in\{1,\ldots,d\}$. This step takes $O(d)$ time.

  If $r$ has no left child in $T$, then we can immediately apply induction on $S\setminus\{r\}$ and make the right child of $r$ in $T_0'$ the root of the model obtained by induction.  The case in which $r$ has no right child can be handled similarly. If $r$ has both a left child $r_1$ and a right child $r_2$, then the next step is to partition $S\setminus\{r\}$ into a set $S_1$ of descendants of $r_1$ and a set $S_2$ of descendants of $r_2$.   For each $x\in S\setminus\{r\}$ there are only two possibilities for $\lca_{T}(r_1,x)$
  \begin{compactenum}
    \item If $\lca_{T}(r_1,x)=r_1$ then $x\in S_1$.
    \item If $\lca_{T}(r_1,x)=r$ then $x\in S_2$.
  \end{compactenum}
  Therefore, using $O(d)$ lowest common ancestor queries, we can determine the root $r$ of $T'$ and partition $S\setminus\{r\}$ into sets $S_1$ and $S_2$ that define the left and right subtrees of $r$.  We can now recurse on $S_1$ to obtain a tree with root $r_1'$ and recurse on $S_2$ to obtain a tree with root $r_2'$.  We make $r_1'$ the left child of $r$ and $r_2'$ the right child of $r$ to obtain the model $T_0'$ of $T_0$.  The running-time of this  algorithm obeys the recurrence $T(d)\le O(d)+T(d_1) + T(d_2)$, where $d_1 + d_2 \leq d$ and $d_1, d_2 \leq d-1$.  This recurrence resolves to $T(d)\in O(d^2)$.
\end{proof}


%

\section{Tripod Decompositions}
\label{tripod_decompositions}

Let $G$ be an $n$-vertex triangulation and let $T$ be a spanning tree of $G$. For a face $uvw$ of $G$, a \emph{$(G,T)$-tripod} $Y$ with \emph{crotch} $uvw$ is the vertex set of three disjoint (and each possibly empty) upward paths (the \emph{legs} of $Y$) whose lower endpoints are $u$, $v$, and $w$.  A \emph{$(G,T)$-tripod decomposition} is a partition of $V(G)$ into $(G,T)$-tripods.  \citet{dujmovic.joret.ea:planar} proved the following result:

\begin{thm}\label{tripod_decomposition}
  Let $G$ be a triangulation and $T$ be a spanning tree of $G$.  Then there exists a $(G,T)$-tripod decomposition $\mathcal{Y}$ such that $G/\mathcal{Y}$ has treewidth at most $3$.
\end{thm}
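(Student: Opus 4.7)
The plan is to proceed by strong induction on a strengthened statement about near-triangulations whose outer boundary carries structure coming from tripods already chosen. Call a near-triangulation $H\subseteq G$ \emph{admissible} if its outer face is the union of at most three arcs $A_1,A_2,A_3$, each a contiguous segment of a leg of a distinct previously-selected tripod, meeting at three corner vertices. The strengthened claim is: for every admissible $H$ there is a partition of $V(H)\setminus(V(A_1)\cup V(A_2)\cup V(A_3))$ into $(G,T)$-tripods whose quotient, together with $A_1,A_2,A_3$ identified with their parent tripods, admits a tree decomposition of width at most $3$ with some bag containing the three labels of $A_1,A_2,A_3$. Setting $H=G$ and viewing the outer triangle as three singleton arcs of a virtual initial tripod then recovers the theorem.

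For the inductive step I would construct one new tripod $Y$ whose crotch is a carefully chosen interior face $f=xyz$ of $H$, and whose three legs are the maximal upward $T$-paths in $H$ starting at $x,y,z$, each truncated the first moment it meets $A_1\cup A_2\cup A_3$. With a good choice of $f$, the three legs are pairwise disjoint, each terminates on a different arc (or at a corner), and $Y$ together with portions of the $A_i$ partitions $H$ into at most three smaller admissible near-triangulations $H_1,H_2,H_3$ whose boundary arcs are drawn from $\{A_1,A_2,A_3,Y\}$. Applying the inductive hypothesis to each $H_i$ yields three sub-decompositions that I would glue by creating one fresh bag $\{A_1,A_2,A_3,Y\}$ of size $4$ and attaching each sub-decomposition at its distinguished bag (which by the inductive hypothesis contains three of these four labels). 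Every resulting bag has size at most $4$, so the tree decomposition has width at most $3$.

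The main obstacle is the geometric step of choosing a face $f$ satisfying the properties above. Because $T$ spans $G$, the upward $T$-path from any interior vertex of $H$ must eventually exit through the outer boundary, so the legs of $Y$ are always well-defined; the subtlety is forcing them to land on three \emph{different} arcs without crossing. I would use a Jordan-curve/planarity argument in the embedded near-triangulation, for example by walking through the dual cotree of $H$ starting from its outer face and selecting the first interior face whose three vertex-to-boundary tree-paths exit through three distinct arcs. Degenerate configurations (a trivial arc, two coincident corners, or $f$ sharing an edge with the outer face) would be handled by direct case analysis, and the tree-decomposition gluing is routine bookkeeping once this structural fact about $f$ is in place.
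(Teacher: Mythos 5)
Your proposal is essentially the paper's approach (which in turn follows Dujmović et al.): recurse on near-triangulations bounded by at most three arcs, pick an interior face $f$, grow a tripod from $f$ to all three arcs, split into at most three subproblems, and glue the resulting tree decompositions through a single fresh bag of size $4$. The gluing step and the reduction to the strengthened inductive statement are correct as you have set them up, and your observation that disjointness of the three legs follows once they exit through three distinct arcs is also correct (if two legs merged before reaching the boundary, they would exit through the same arc).

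The genuine gap is precisely the step you flag as ``the main obstacle,'' and your sketch for closing it does not actually close it. You propose to colour each interior vertex by the arc through which its upward $T$-path exits, then find an interior face $f$ whose three vertices carry three distinct colours by ``walking through the dual cotree of $H$ starting from its outer face and selecting the first interior face'' with this property. As stated, this is not an argument: there is no reason that a greedy walk of the cotree encounters such a face, nor that one exists at all, without further justification. The missing idea is exactly Sperner's Lemma (in its planar, triangulated-polygon form): with the colouring you describe, the boundary of $H$ is partitioned into three monochromatic arcs, and Sperner's Lemma guarantees a trichromatic interior triangle. The paper makes this existence statement precise (see the discussion around the Sperner triangle $f_i$ and the parity argument in \cref{baby_sperner}), and in fact builds the linear-time algorithm on a refinement of it, namely that the Sperner triangle sits at the branch point of the subtree of the cotree spanning the faces incident to the bichromatic boundary edges. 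Once you replace your hand-wave with an explicit invocation of Sperner's Lemma (or the two-colour parity argument of \cref{baby_sperner} applied after merging two of the three colours), the proof goes through and coincides with the paper's. The remaining degenerate cases you mention (trivial arcs, shared corners, $f$ sharing an edge with the outer face) do need to be checked, but they are handled by the ``proper'' condition in the paper and are not where the difficulty lies.
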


It is straightforward to verify that \cref{tripod_decomposition} implies \cref{meta}.\ref{three_tree} by first triangulating the given graph and then taking $T$ to be a breadth-first spanning tree of the resulting triangulated graph \cite[Observation~35]{dujmovic.joret.ea:planar}.


\subsection{Tripod Decompositions from Face Orderings}
\label{orderings}

We now describe how a $(G,T)$-tripod decompositions can be obtained from a sequence of distinct faces of $G$.  Throughout this section (and for the remainder of the paper):
\begin{compactitem}
   \item $G$ is an embedded triangulation with outer face $f_0$ and
   \item $T$ is a spanning-tree of $G$ rooted at a vertex $v_0\in V(f_0)$.
\end{compactitem}
For any subgraph $f$ of $G$, we define $Y_T(f):=f\cup \bigcup_{v\in V(f)} P_T(v)$.\footnote{In all of our examples, the subgraph $f$ will always be a single edge or single face of $G$.}  In words, $Y_T(f)$ is the subgraph of $G$ that includes all the vertices and edges of $f$ and all the vertices and edges of each path from each vertex of $f$ to the root of $T$.

Let $\mathcal{F}:=f_0,\ldots,f_{r}$ be a sequence of distinct faces of $G$ whose first element is the outer face $f_0$. Let $G_{-1}$ denote the graph with no vertices and, for each $i\in\{0,\ldots,r\}$, define the graph $G_i:=\bigcup_{j=0}^i Y_T(f_j)$ and let $Y_i:=V(G_i)\setminus V(G_{i-1})$.  Let $\mathcal{G_F}:= G_0,\ldots,G_{r}$ and let $\mathcal{Y_F}:=Y_0,\ldots,Y_r$.  

Informally, we require that each of the \emph{legs} of each tripod $Y_i$ have  a \emph{foot} on a different vertex of $G_{i-1}$ and that the tripods $Y_1,\ldots,Y_r$ cover all the vertices and edges of $G$. Formally, we say that the sequence $\mathcal{F}$ is \emph{proper} if, for each $i\in\{1,\ldots,r\}$, and each distinct $v,w\in V(f_i)$, $V(Y_T(v)\cap G_{i-1})\neq V(Y_T(w)\cap G_{i-1})$.  The sequence $\mathcal{F}$ is \emph{complete} for $G$ if $G_r=G$.  Note that, if $\mathcal{F}$ is complete, then $\{Y_0,\ldots,Y_r\}$ is a tripod decomposition of $G$.

From the preceding definitions it follows that, if $\mathcal{F}$ is proper, then  $G_i$ is $2$-connected for each $i\in\{0,\ldots,r\}$.  For any $i\in\{0,\ldots,r\}$, consider any face $f$ of $G_i$, that we now treat as a cycle in $G$. An easy proof by induction shows that, for any $j\in\{0,\ldots,i\}$, the induced graph $f[Y_j]$ is connected. 
We are interested in keeping the number of tripods in $Y_0,\ldots,Y_i$ that contribute to $V(f)$ as small as possible, which motivates our next definition.

The sequence $\mathcal{F}$ is \emph{good} if the resulting sequence of graphs $\mathcal{G}_\mathcal{F}:=G_0,\ldots,G_r$ and tripods $\mathcal{Y}_\mathcal{F}:=Y_0,\ldots,Y_r$ satisfy the following condition:  For each $i\in\{0,\ldots, r\}$ and each face $f$ of $G_i$,
\[
   |\{\ell\in\{0,\ldots,i\}: V(f)\cap Y_{\ell}\neq\emptyset\}|\le 3 \enspace .
\]
In words, each face of each graph $G_i$ has vertices from at most three tripods of $Y_0,\ldots,Y_i$ on its boundary.  Even more, the vertices of $f$ can be partitioned into at most three paths where the vertices of each path belong to a single tripod. \citet{dujmovic.joret.ea:planar} prove \cref{tripod_decomposition} by proving the next lemma.


\begin{lem}\label{face_trick}
  Let $G$ be a triangulation with a vertex $v_0$ on its outer face $f_0$ and let $T$ be a spanning tree of $G$ rooted at $v_0$.  Then there exists a sequence $\mathcal{F}:=f_0,\ldots,f_{r}$ of distinct faces of $G$ that is proper, good, and complete.
\end{lem}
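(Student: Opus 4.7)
The plan is to construct $\mathcal{F}$ by induction on the number of faces of $G$ not yet in the sequence, maintaining the invariants that $G_i$ is $2$-connected and that the prefix $f_0,\ldots,f_i$ is itself proper and good. As noted in the text preceding the lemma, these invariants together imply that the vertex set of every face $f$ of $G_i$ partitions into at most three paths, one per contributing tripod. For the base case I take $\mathcal{F}:=(f_0)$; since $v_0\in V(f_0)$, $G_0=Y_T(f_0)$ is the triangle $f_0$ together with at most two upward $T$-paths from the non-root vertices of $f_0$, all drawn inside the closure of $f_0$'s bounded region, so $G_0$ is $2$-connected, $Y_0=V(G_0)$ is a single tripod, and the sequence is trivially proper and good.

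For the inductive step, assume $G_i\ne G$. Then some bounded face $f$ of $G_i$ contains at least one vertex or edge of $G$ outside $G_i$; let $N$ be the near-triangulation of $G$ bounded by $f$ and let $P_1,P_2,P_3$ (some possibly empty) be the paths partitioning $V(f)$ by tripod, lying in distinct tripods $Y_{\alpha_1},Y_{\alpha_2},Y_{\alpha_3}$. I must choose a face $f_{i+1}$ of $N$ different from $f$ so that (a) the three vertices of $f_{i+1}$ have pairwise distinct sets $V(Y_T(\cdot)\cap G_i)$---equivalently, pairwise distinct first $T$-ancestors in $V(G_i)$---which gives properness, and (b) each sub-face of $G_{i+1}$ lying inside the former region of $f$ uses at most three tripods from $\{Y_{\alpha_1},Y_{\alpha_2},Y_{\alpha_3},Y_{i+1}\}$, which gives goodness.

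The main obstacle is proving that such an $f_{i+1}$ always exists; this is the technical heart of the argument. In the degenerate case where $N$ has no interior vertex, I take $f_{i+1}$ to be any face of $N$ other than $f$, so that $Y_{i+1}=\emptyset$ and the chord splitting $f$ produces two sub-faces each inheriting at most two of the arcs $P_1,P_2,P_3$. When $N$ has interior vertices, the construction starts by picking an interior vertex $x$ of $N$ whose upward $T$-path exits $N$ at a vertex $y\in V(f)$ on some arc $P_j$---such $x$ is obtained by taking the last interior vertex along any upward $T$-path from an interior vertex of $N$---and then selects a face $f_{i+1}$ of $N$ whose three vertices have feet on two distinct arcs $P_\alpha,P_\beta$ plus one interior $T$-branch whose first ancestor differs from both; crucially, $f_{i+1}$ must be chosen to \emph{not} contain the tree edge $xy$ itself, since otherwise $x$ and $y$ would share a foot. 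Existence of such a face is verified by a case analysis inside $N$ driven by how the at most three arcs $P_j$ and the $T$-subtrees hanging into $N$ interact around $y$; once $f_{i+1}$ is chosen this way, the goodness check for each new sub-face reduces to the observation that each leg of $Y_{i+1}$ is an upward $T$-path touching $V(f)$ only at its foot, so each sub-face inherits at most two of the original arcs plus a single new tripod. Termination is immediate since each step strictly decreases $|F(G)\setminus\{f_0,\ldots,f_i\}|$, so eventually $G_r=G$ and $\mathcal{F}$ is complete.
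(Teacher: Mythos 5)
The inductive framework you set up — maintaining properness, goodness, and $2$-connectivity of $G_i$ as you extend the face sequence — matches the recursive proof of \citet{dujmovic.joret.ea:planar} that the paper cites, and your base case is fine. But the argument has a genuine gap exactly where you flag ``the technical heart'': the existence of the face $f_{i+1}$ is never actually proved. Writing ``Existence of such a face is verified by a case analysis inside $N$ driven by how the at most three arcs $P_j$ and the $T$-subtrees hanging into $N$ interact around $y$'' is a promissory note, not an argument. The whole content of \cref{face_trick} is this existence claim; it is proved in \cite{dujmovic.joret.ea:planar} via Sperner's Lemma applied to the vertex colouring of $N$ by ``which arc of $f$ does $P_T(\cdot)$ first hit'', and the present paper's \cref{lca_sperner} gives an alternative planarity argument locating $f_i$ as a centroid-like node of the cotree subtree $\overline{T}_0$ spanning the faces incident to the bichromatic edges of $f$. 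Your proposal invokes neither, and the ad hoc $x,y$ construction does not obviously produce anything.

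Moreover, the selection criterion you state for $f_{i+1}$ is too weak to guarantee goodness. You ask that the three vertices of $f_{i+1}$ ``have feet on two distinct arcs $P_\alpha,P_\beta$ plus one interior $T$-branch whose first ancestor differs from both.'' Having three \emph{distinct} feet gives properness, but not goodness: if $|I_f|=3$ and the three feet fall only on $P_\alpha$ and $P_\beta$ (two on one, one on the other), one of the new sub-faces of $G_{i+1}$ inside $f$ is bounded by segments of $P_\alpha$, $P_\beta$, all of $P_\gamma$, and a leg of $Y_{i+1}$, so it meets four tripods. What Sperner's Lemma (or \cref{lca_sperner}) actually delivers is a face whose three legs land on \emph{three different arcs} when $|I_f|=3$; without that, the recursion breaks. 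A minor further slip: in your degenerate case the chord need not split $f$ into sub-faces with ``at most two of the arcs'' each — a chord inside a single arc leaves one sub-face touching all three — but goodness happens to survive there since $Y_{i+1}=\emptyset$. The serious problem is the unproven and incorrectly stated trichromatic-face existence step.
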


\begin{rem}
  \cref{face_trick} is stated in terms of sequences only for convenience and could be rephrased in terms of partial orders. Indeed, consider the partial order $\prec$ defined as follows:  For each $i\in\{1,\ldots,r\}$ let $f_i'$ be the face of $G_{i-1}$ that contains $f_i$; then $f_\ell\prec f_i$ for each $\ell\in\{0,\ldots,i-1\}$ such that $V(f_i')\cap Y_\ell\neq\emptyset$.  It is straightforward to check that any linearization of this partial order will result in the same tripod decomposition $\mathcal{Y_F}:=\{Y_0,\ldots,Y_{r}\}$.
\end{rem}

\citet{dujmovic.joret.ea:planar} prove \cref{face_trick} by giving a recursive algorithm that constructs the face sequence $\mathcal{F}$.  For a face $f$ of $G_i$, define the set $I_f:=\{\ell\in\{0,\ldots,i\}:V(f)\cap Y_\ell\neq\emptyset\}$.  They begin with the outer face $f_0$ of $G$.  To find the face $f_i$, $i>0$, they consider some face $f\not\in\{f_0,\ldots,f_{i-1}\}$ of $G_{i-1}$ and use Sperner's Lemma to show that there is an appropriate face $f_i$ of $G$ (called a \emph{Sperner triangle}) that is contained in $f$. In particular, $f_i$ is chosen so that the three upward paths in $Y_F(f_i)$ lead back to each of the (at most 3) tripods in $\{Y_j:j\in I_f\}$. See \cref{sperner}.

\begin{figure}
  \begin{center}
    \begin{tabular}{ccc}
      \includegraphics{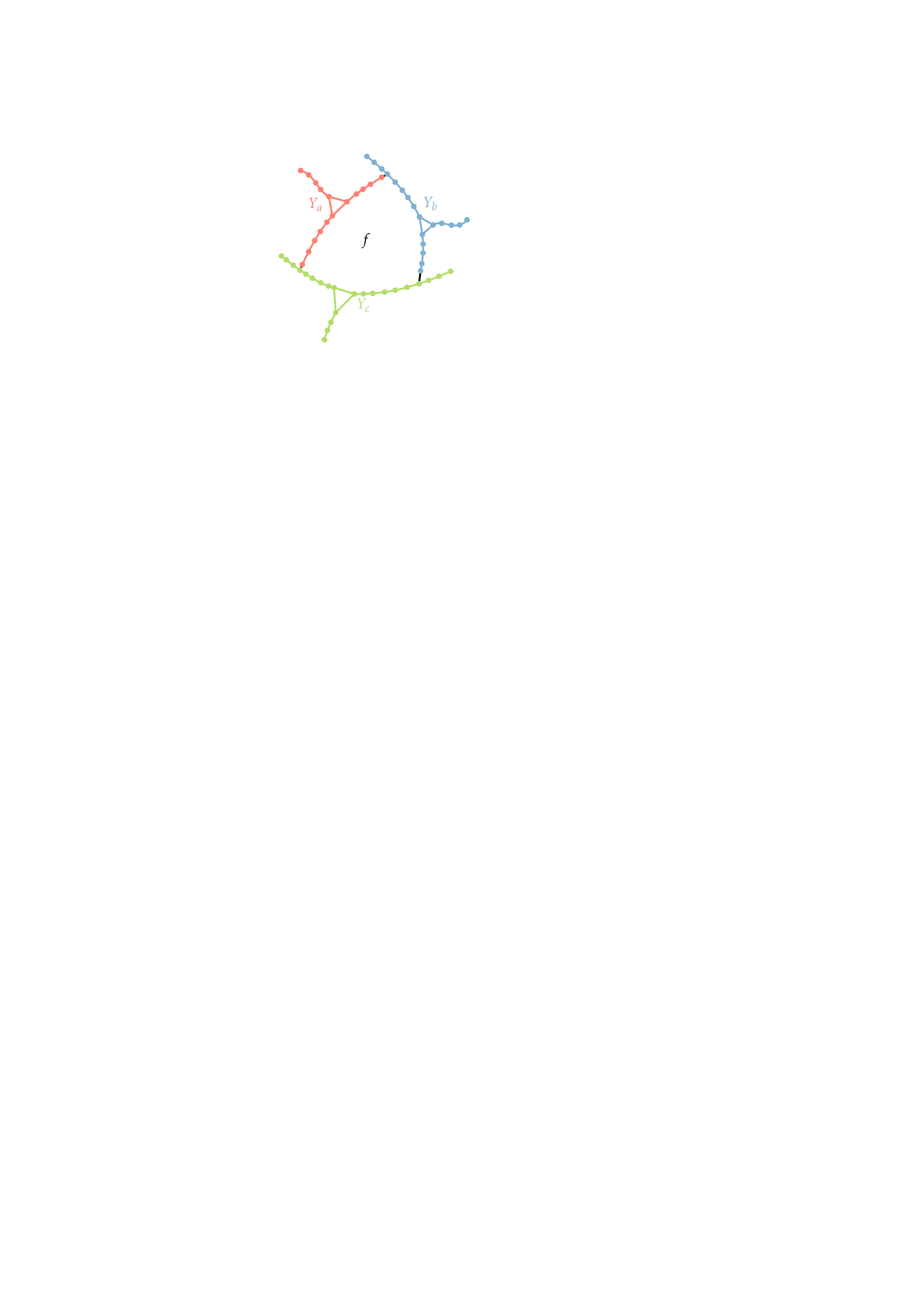} &
      \includegraphics{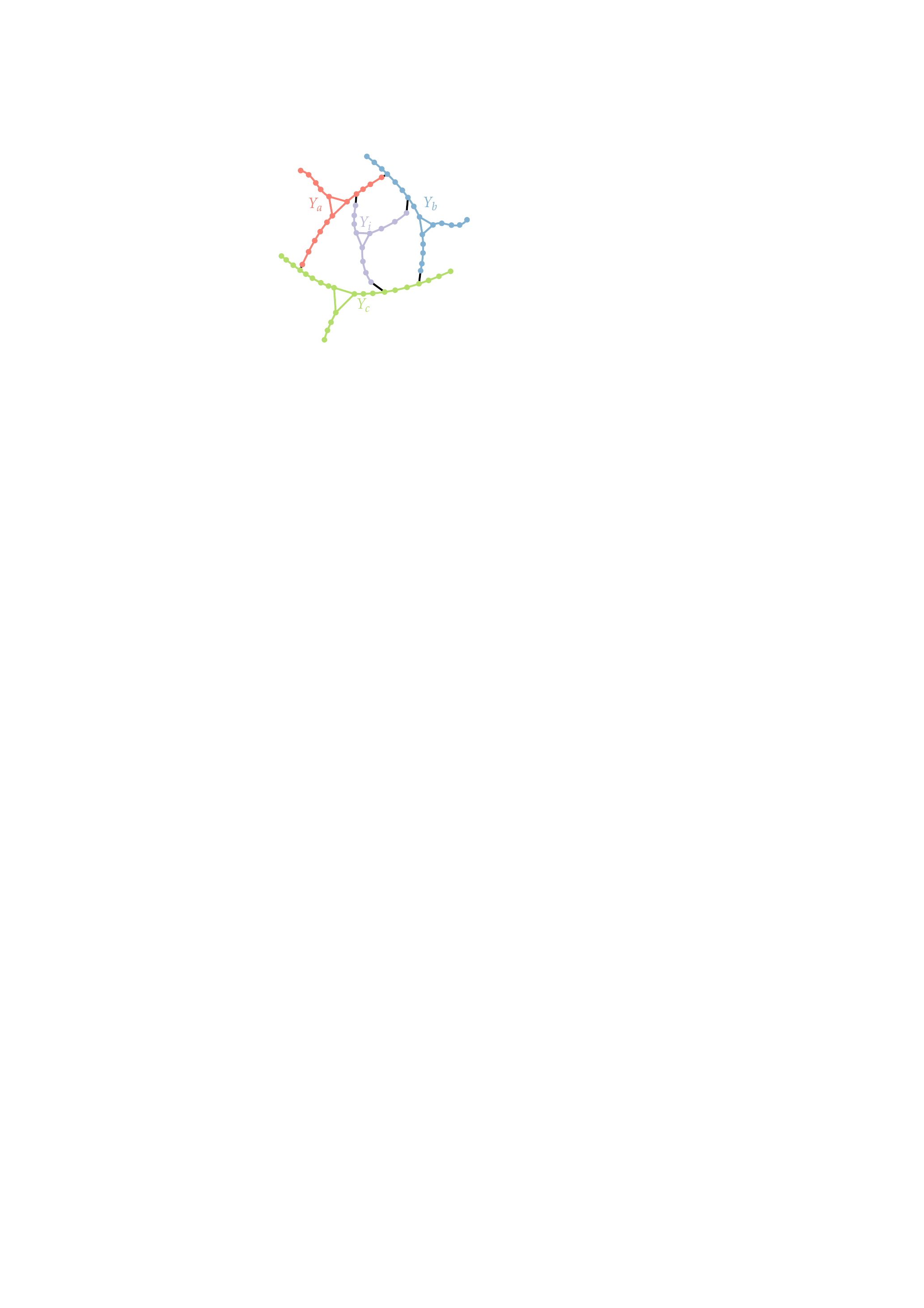} &
      \includegraphics{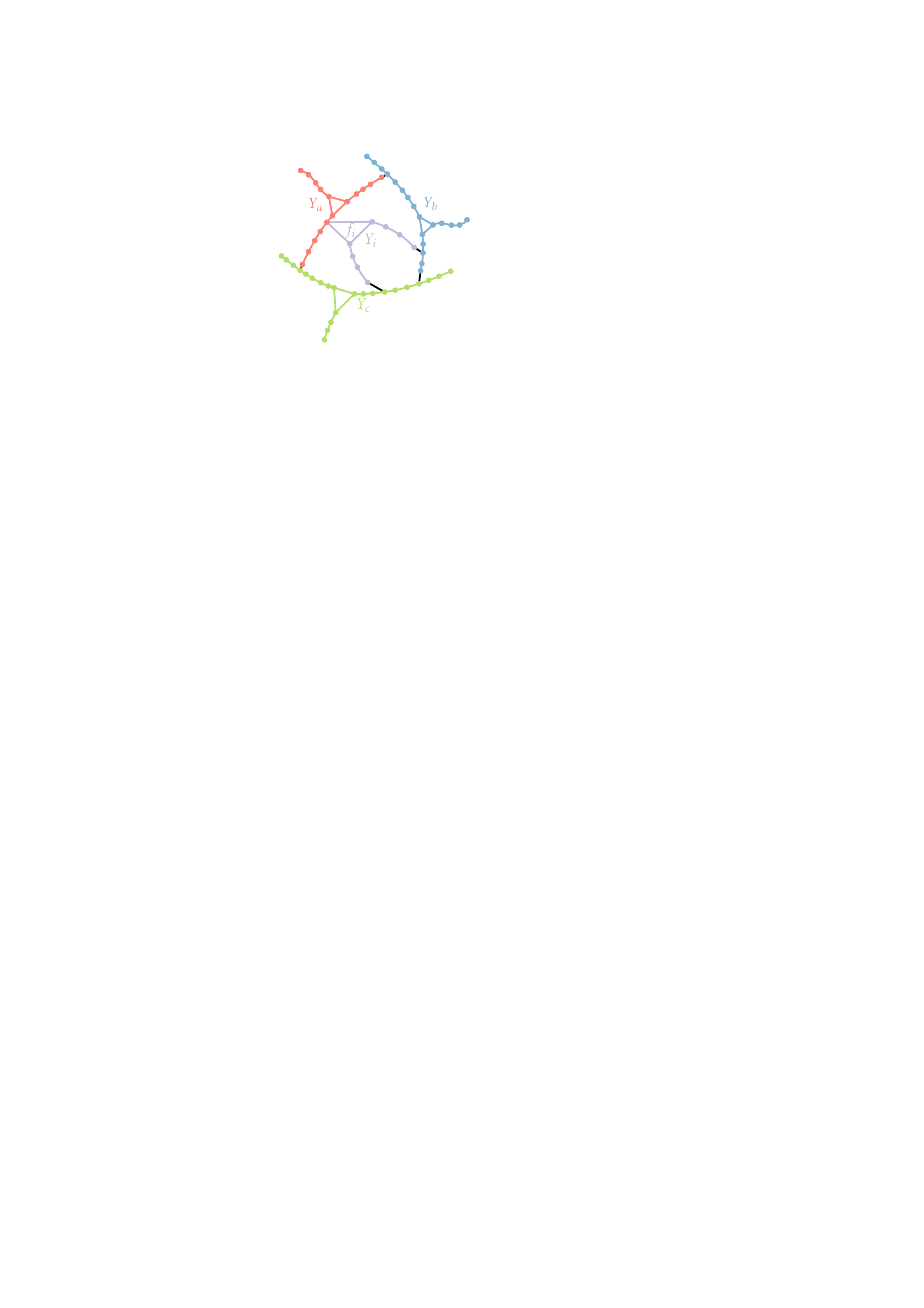} \\
      (a) & (b) & (c)
    \end{tabular}
  \end{center}
  \caption{Each face $f$ in $G_{i-1}$ is bounded by at most three tripods $Y_{a_f}$, $Y_{b_f}$, and $Y_{c_f}$ and the tripod $Y_i$ is chosen so that it connects each of these.}
  \label{sperner}
\end{figure}

This proof leads to a divide-and-conquer algorithm: After finding $f_i$, the algorithm recursively decomposes each of the near-triangulations that are bounded by the at most three new faces in $S_i:=F(G_i)\setminus F(G_{i-1})\setminus \{f_i\}$.  The Sperner triangle $f_i$ can easily be found in time proportional to the number of faces of $G$ in the interior of $f$.  However, because the resulting recursion is not necessarily balanced, a straightforward implementation of this yields an algorithm with $\Theta(n^2)$ worst-case running time.

\citet{morin:fast} later showed that, using an appropriate data structure for $T$, this approach can be implemented in such a way that the resulting algorithm runs in $O(n\log n)$ time.  Essentially, Morin's algorithm works by finding the Sperner triangle $f_i$ in time proportional to the minimum number of faces of $G$ contained in any of the faces in $S_i$.  In the next section, we will show that, by using a lowest common ancestor data structure for the cotree $\overline{T}$ along with \cref{reconstruction}, the Sperner triangle $f_i$ can be found in constant time, yielding an $O(n)$ time algorithm.

By now, our presentation of this material differs somewhat from that in \cite{dujmovic.joret.ea:planar,ueckerdt.wood.ea:improved}.  Therefore, we now pause to explain how \cref{face_trick} implies \cref{tripod_decomposition}.\ref{three_tree}.  To do this, we show that there exists a chordal graph $H$ whose largest clique has size at most $4$ and that contains $G/\mathcal{Y_F}$. We construct the graph $H$ so that for each $i\in\{0,\ldots,r\}$ and each face $f$ of $G_i$, $H$ contains a clique on $\{Y_j:j\in I_f\}$. To accomplish this, for each $i\in\{1,\ldots,r\}$ we let $f$ be the face of $G_{i-1}$ that contains $f_i$ and we form a clique on $\{Y_i\}\cup\{Y_j:j\in I_f\}$.  Inductively, the elements of $\{Y_j:j\in I_f\}$ already form a clique, so this operation is equivalent to attaching $Y_i$ to all the vertices of an existing clique of size at most $3$. Therefore, this results in a chordal graph $H$ whose largest clique has size at most $4$ and therefore $H$ has treewidth at most $3$ \cite{gavril:intersection}.

\section{An $O(n)$-Time Algorithm}
\label{linear_time_algorithm}\label{algorithm}

Refer to \cref{baby_sperner_fig} for an illustration of the following (probably well-known) baby version of Sperner's Lemma:

\begin{figure}
  \begin{center}
    \includegraphics{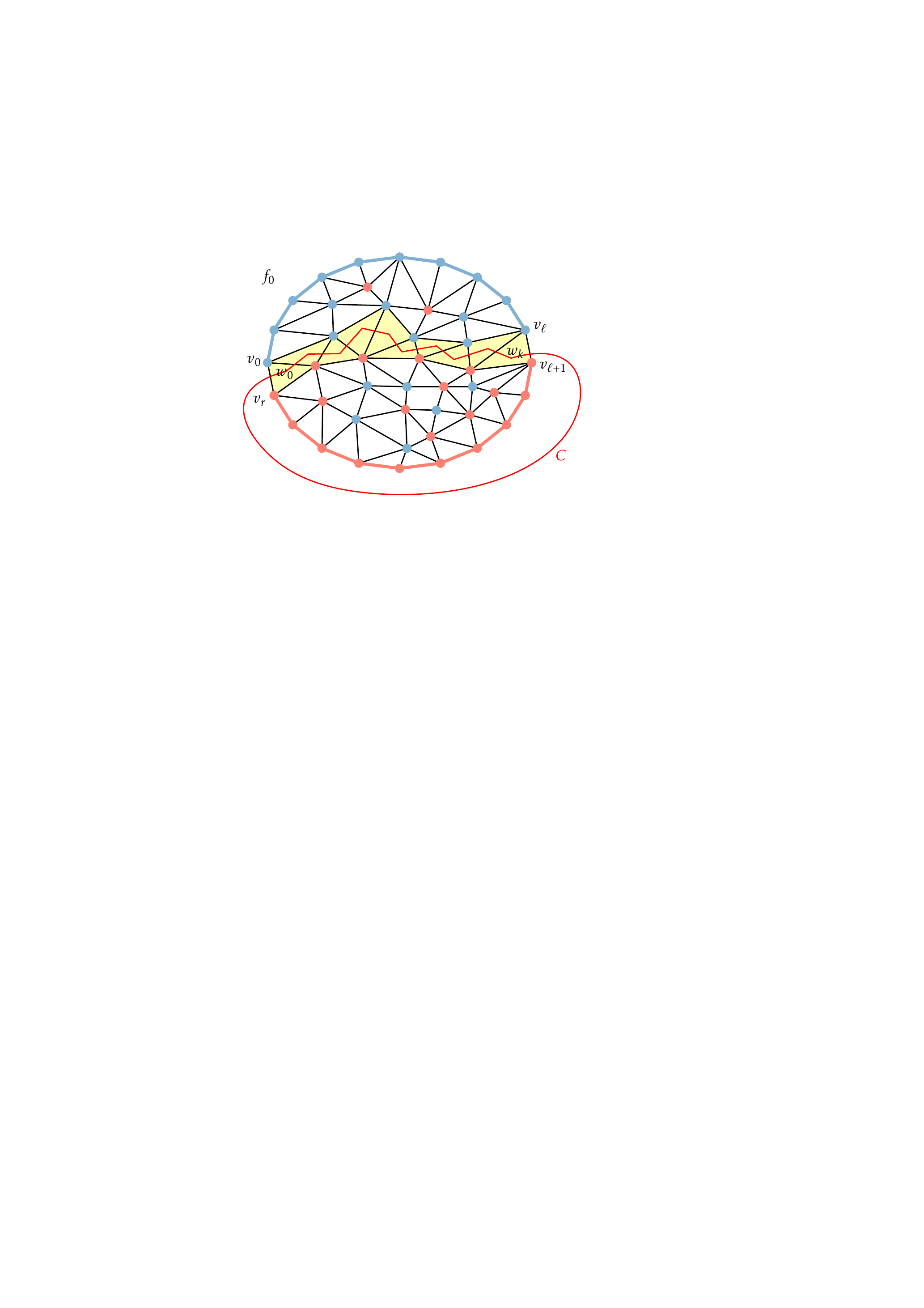}
  \end{center}
  \caption{\cref{baby_sperner}}
  \label{baby_sperner_fig}
\end{figure}

\begin{lem}\label{baby_sperner}
  Let $N$ be a near-triangulation with outer face $v_0,\ldots,v_r$ and colour each vertex of $N$ red or blue in such a way that $v_0,\ldots,v_\ell$ are coloured red for some $\ell\in\{0,\ldots,r-1\}$ and $v_{\ell+1},\ldots,v_r$ are coloured blue.  Then there exists a path $w_0,\ldots,w_k$ in $N^\star$ such that
  \begin{compactenum}
    \item $w_0$ is the inner face of $N$ with $v_0v_r$ on its boundary;
    \item $w_k$ is the inner face of $N$ with $v_{\ell}v_{\ell+1}$ on its boundary; and
    \item for each $i\in\{1,\ldots,k\}$, the single edge in $E(w_{i-1})\cap E(w_i)$ has an endpoint of each colour.
  \end{compactenum}
\end{lem}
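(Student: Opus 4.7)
The plan is to use a standard parity argument based on the observation that any triangle whose vertices are coloured red or blue has either $0$ or $2$ bichromatic edges. I would first apply this to each inner face of $N$. I would then check the outer cycle $v_0,\ldots,v_r$ separately: since $v_0,\ldots,v_\ell$ are red and $v_{\ell+1},\ldots,v_r$ are blue, the only bichromatic outer edges are $v_\ell v_{\ell+1}$ and $v_0 v_r$, so the outer face also meets exactly two bichromatic edges.

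Next, I would form the spanning subgraph $H$ of $N^\star$ (treating the outer face as a vertex of $N^\star$ in the usual way) whose edges are exactly the duals of bichromatic primal edges. By the previous step, every vertex of $H$ has degree $0$ or $2$, so $H$ is a disjoint union of simple cycles and isolated vertices.

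Finally, since the outer face has degree $2$ in $H$, it lies on some simple cycle $C$ of $H$. Its two neighbours along $C$ must be the two inner faces that share a bichromatic boundary edge with the outer face, namely the inner face containing $v_0 v_r$ and the inner face containing $v_\ell v_{\ell+1}$; these are precisely $w_0$ and $w_k$. Deleting the outer face from $C$ leaves a sequence $w_0,\ldots,w_k$ in $N^\star$ along which consecutive pairs share a bichromatic edge, as required.

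I do not expect a serious obstacle; the only points requiring care are the bookkeeping that makes the outer face a degree-$2$ vertex of $H$, and the degenerate case where $N$ is a single triangle so $w_0 = w_k$ and $k = 0$, which is consistent with the statement allowing a path of length zero.
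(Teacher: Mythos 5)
Your proposal is correct and is essentially identical to the paper's proof: the same parity argument on bichromatic edges per face, the same auxiliary subgraph $H$ of $N^\star$ with degree $0$ or $2$ everywhere, and the same conclusion that the outer face lies on a cycle whose removal yields the desired path. The only cosmetic difference is that the paper dispatches the degenerate case $w_0 = w_k$ up front, while you note it at the end.
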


\begin{proof}
  If $w_0=w_k$, the lemma is immediately true, so assume $w_0\neq w_k$.
  Say that an edge of $N$ is \emph{bichromatic} if one of its endpoints is red and the other is blue.  Any edge that is not bichromatic is \emph{monochromatic}.  The outer face $f_0$ of $N$ has exactly two bichromatic edges $v_0v_r$ and $v_{\ell}v_{\ell+1}$ and any inner face of $N$ has either zero or two bichromatic edges.  Consider the subgraph $H$ of $N^\star$ obtained removing each edge $fg\in E(N^\star)$ such that the edge in $E(f)\cap E(g)$ is monochromatic.  Every vertex in $H$ has degree $0$ or $2$, so each connected component of $H$ is either an isolated vertex or a cycle.  The face $f_0$ has degree $2$ so it is contained in a cycle $C$ of $H$.  The two neighbours of $f_0$ in $H$ are $w_0$ and $w_k$. Therefore $C$ contains a path $w_0,\ldots,w_k$ that satisfies the conditions of the lemma.
\end{proof}

The next lemma, which is the main new insight in this paper, allows us to use \cref{reconstruction} to find Sperner triangles in constant time.

\begin{lem}\label{lca_sperner}
  Let $G$ be a triangulation with a vertex $v_0$ on its outer face $f_0$; let $T$ be a spanning tree of $G$ rooted at $v_0$; let $\overline{T}$ be the cotree of $(G,T)$ rooted at $f_0$; let $f_0,\ldots,f_{i-1}$ be a good proper sequence of faces of $G$ that yields a sequence $\mathcal{G_F}:=G_0,\ldots,G_{i-1}$ of graphs and a sequence $\mathcal{Y_F}:=Y_0,\ldots,Y_{i-1}$ of tripods; let $f\not\in \{f_0,\ldots,f_{i-1}\}$ be a face of $G_{i-1}$, and let $S\subseteq F(G)$ contain exactly the (at most three) faces $g\in F(G)$ such that
  \begin{compactenum}[(i)]
      \item $g$ is contained in the interior of $f$;
      \item $g$ contains an edge $vw\in E(f)$ with $v\in Y_a$ and $w\in Y_b$ for some distinct $a,b\in I_f$.
  \end{compactenum}
  Let $\overline{T}_0$ be the minimal subtree of $\overline{T}$ that spans $S$.  Then, if $S$ is non-empty and $f_i\in V(\overline{T}_0)$ is such that each component of $\overline{T}_0-f_i$ contains at most one element of $S$,
  Then $f_0,\ldots,f_i$ is good.
\end{lem}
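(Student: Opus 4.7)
The plan is to identify $f_i$ as a Sperner-type triangle for $f$: a face of $G$ inside $f$ whose three legs in $Y_T(f_i)$ collectively reach all tripods of $I_f$, where a leg ``reaches'' the tripod containing the first vertex of $\partial f$ visited by the corresponding upward $T$-path. Once this is established, the goodness of $f_0,\ldots,f_i$ follows by a routine check. The only faces of $G_i$ that could violate goodness are those carved out of $f$: the triangle $f_i$ itself (with at most three tripods on its boundary, since it has three vertices) together with the at most three pie-slice faces, each bounded by an edge of $f_i$, two legs of $Y_i$, and an arc of $\partial f$ running between consecutive landing points. Because each $f[Y_j]$ is a connected arc of $\partial f$, each such arc meets at most two of the tripods in $I_f$, so every pie-slice satisfies $|I_{f'}|\le 3$.

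To identify $f_i$, I would let $N$ be the near-triangulation bounded by $f$ and, for each $v\in V(N)$, let $\chi(v)\in I_f$ be the index of the tripod containing the first vertex of $\partial f$ encountered while traversing $P_T(v)$ from $v$ toward the root (when $v\in\partial f$, $\chi(v)$ is just the index of the tripod containing $v$). This colouring is well defined because $v_0\in V(f_0)$ lies outside the bounded face $f$, so every upward $T$-walk from an interior vertex of $f$ must cross $\partial f$. The key technical claim is that every tree edge $vw$ of $G$ not lying on $\partial f$ satisfies $\chi(v)=\chi(w)$. This rests on the fact that $V(G_{i-1})=\bigcup_{\ell\le i-1}V(Y_T(f_\ell))$ is closed under taking $T$-ancestors, so any tree edge of $G$ with both endpoints in $V(G_{i-1})$ already belongs to $E(G_{i-1})$ and thus cannot lie inside $f$. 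Hence at least one endpoint of $vw$ lies in the interior of $f$, and a short parent-child case analysis then shows that $P_T(v)$ and $P_T(w)$ visit the same first $\partial f$-vertex. As a consequence, every bichromatic inner edge of $N$ is a non-tree edge of $G$, and its dual in $G^\star$ belongs to $\overline{T}$.

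The rest is an application of \cref{baby_sperner}. For each pair of distinct $\alpha,\beta\in I_f$, applying \cref{baby_sperner} with red $=\chi^{-1}(\alpha)$ and blue $=V(N)\setminus\chi^{-1}(\alpha)$ (using that $f[Y_\alpha]$ is a connected arc of $\partial f$ to supply the required red-then-blue segregation of the outer boundary) yields a dual path in $N^\star$ between the two corresponding elements of $S$ whose edges are all duals of bichromatic inner edges; by the previous paragraph this path lies in $\overline{T}$, so it must coincide with the unique $\overline{T}$-path between its endpoints. Taking the union of these at most three Sperner paths yields $\overline{T}_0$, which consequently lies entirely among the inner faces of $N$. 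Any $f_i\in V(\overline{T}_0)$ separating $S$ must lie on every pairwise Sperner path, and this forces $\chi$ to attain every value in $I_f$ on $V(f_i)$; hence $f_i$ is a Sperner triangle in the sense above, and the goodness bookkeeping of the first paragraph applies. The main obstacle will be the monochromaticity of interior tree edges under $\chi$, which hinges on the closure of $V(G_{i-1})$ under $T$-ancestors.
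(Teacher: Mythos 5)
Your proposal follows essentially the same route as the paper's proof: the same colouring $\chi$ (colour a vertex by the tripod containing the first vertex of $\partial f$ met along its upward $T$-path), the same observation that every tree edge interior to $f$ is $\chi$-monochromatic (hence the bichromatic dual subgraph sits inside $\overline{T}$), and the same reduction to \cref{baby_sperner} via two-colourings to conclude that $\overline{T}_0$ lives among the inner faces of $N$ and that any separating choice of $f_i$ sees every colour of $I_f$.

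The one place you are materially less careful than the paper is the final ``goodness bookkeeping.'' You assert that each pie-slice's arc of $\partial f$ (between two consecutive landing points of the legs of $Y_i$) meets at most two tripods of $I_f$, and you justify this only by the fact that each $f[Y_j]$ is a single arc of $\partial f$. That alone does not suffice: with three colour blocks $A_a,A_b,A_c$ on $\partial f$ and landing points $u'\in A_a$, $v'\in A_b$, $w'\in A_c$, the arc between two consecutive landing points is short (two colours) only if the cyclic order of the landing points agrees with the cyclic order of the colour blocks, which is exactly the nontrivial geometric content. The paper pins this down by constructing the embedded tree $X_0$ inside $f$ (through the centres of the bichromatic edges of the faces of $\overline{T}_0$), observing that $\operatorname{cl}(f)\setminus X_0$ has $|I_f|$ monochromatic components, and noting that the legs of $Y_i$ consist of monochromatic tree edges and therefore never leave their own component; this forces each vertex of $f_i$ into a distinct component and puts the landing points in the required cyclic position. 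Your sketch needs this (short) argument, or an equivalent one, to turn ``routine check'' into a proof; otherwise the pie-slice bound is asserted rather than established. Aside from that, the approach, the key monochromaticity lemma, and the use of \cref{baby_sperner} all match the paper's.
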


\begin{proof}
  Let $N$ be the near-triangulation consisting of all vertices and edges of $G$ contained in the closure of the interior of $f$.  Recall that $I_f:=\{j\in\{0,\ldots,i-1\}: Y_j\cap V(f)\neq\emptyset\}$.  Since $f_0,\ldots,f_{i-1}$ is good, $|I_f|\le 3$.  Since $S$ is non-empty $|I_f|\ge 2$.  For each $j\in I_f$, colour each vertex $v$ of $N$ with the colour $j$ if the first vertex of $P_{T}(v)$ in $V(f)$ is contained in $Y_j$.  Say that an edge or face of $N$ is \emph{monochromatic}, \emph{bichromatic}, or \emph{trichromatic} if it contains vertices of one, two, or three colours, respectively.

  $E(f)$ contains exactly $|I_f|$ bichromatic edges.  Since each element of $S$ is an inner face of $N$ that contains a bichromatic edge of $f$, $|S|\le |I_f|\le 3$.  Let $X$ be the subgraph of $N^\star$ that contains an edge $fg\in E(N^\star)$ if and only if $f$ and $g$ are inner faces of $N$ and the edge in $E(f)\cap E(g)$ is bichromatic.  We claim that $X$ is a subgraph of $\overline{T}$.  In order to show this, we need only argue that each edge $uv$ of $T$ in the interior of $f$ is monochromatic.  Consider any $uv\in E(N)\setminus E(f)$ where $u$ is the $T$-parent of $v$.  If $v\not\in V(f)$ then, by definition, $v$ has the same colour as $u$, so $uv$ is monochromatic. The case where $v\in V(f)$ and $u\not\in V(f)$ can not occur since $v\in V(f)$ implies that $P_T(v)\subseteq G_{i-1}$, but $u\not\in V(G_{i-1})$.  Similarly, the case in which $u\in V(f)$ and $v\in V(f)$ can not occur since this implies that $P_T(v)\subseteq G_{i-1}$, but $uv\not\in E(G_{i-1})$.

  Next we claim that all the elements of $S$ are in a single connected component of $X$.  If $|I_f|=2$, then this follows immediately from \cref{baby_sperner}.  If $|I_f|=3$, then let $\{a,b,c\}:=I_f$ and consider a pair $g_1,g_2\in S$ where (without loss of generality) $g_1$ contains a bichromatic edge of $f$ with colours $a$ and $b$ and $g_2$ contains a bichromatic edge of $f$ with colours $b$ and $c$.  By treating $a$ and $c$ as a single colour we may again apply \cref{baby_sperner} to conclude that $g_1$ and $g_2$ are in the same component of $X$.

  Refer to \cref{lca_sperner_fig}(a).
  Therefore $X$ is a subgraph of $\overline{T}$ that has a component containing all the elements of $S$. Therefore $X$ contains $\overline{T}_0$.  By choice, $\overline{T}_0$ contains a path from $f_i$ to each $g\in S$ and each of these paths is disjoint except for their shared starting location $f_i$. 

  \begin{figure}
    \begin{center}
      \begin{tabular}{ccc}
        \includegraphics{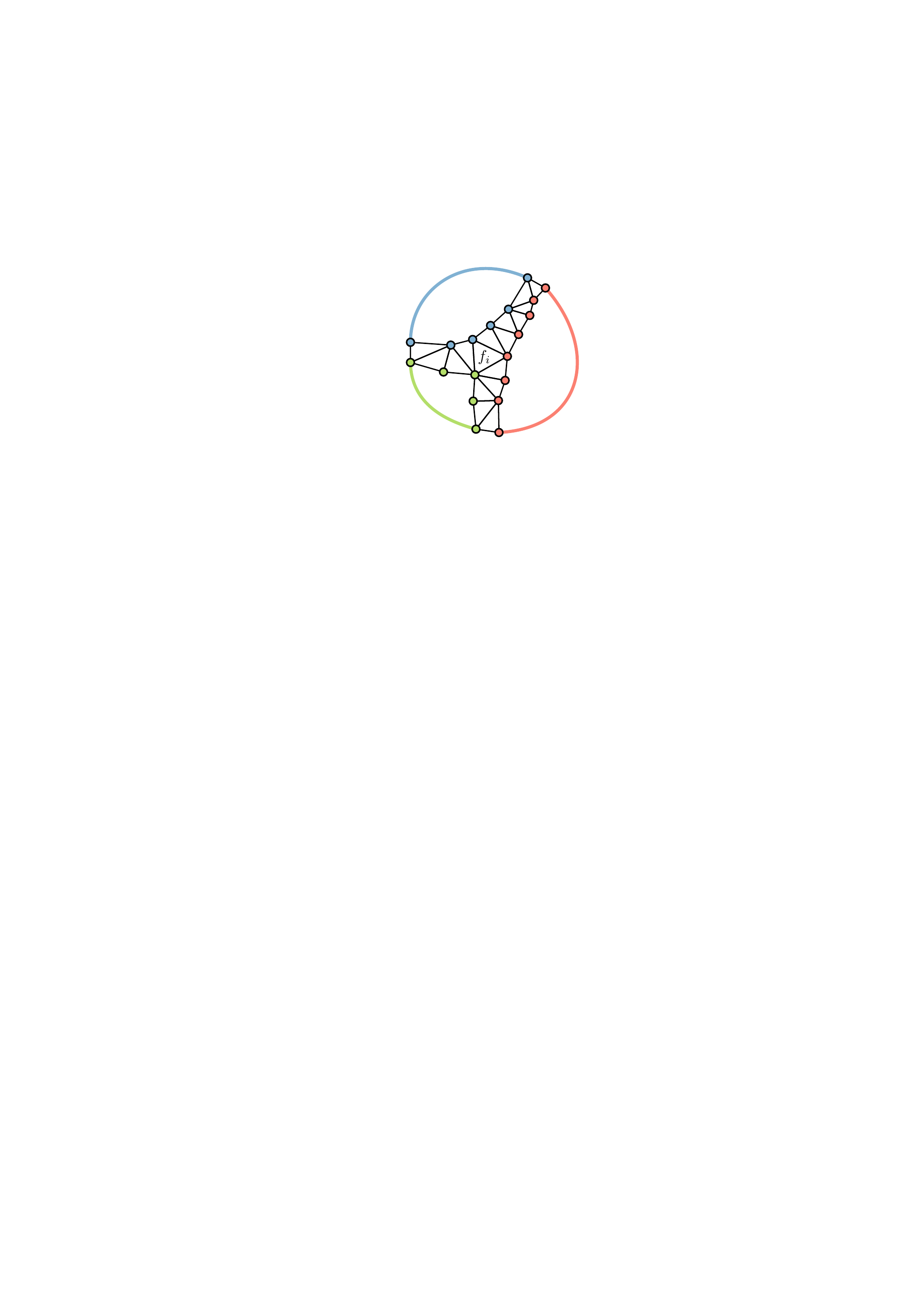} &
        \includegraphics{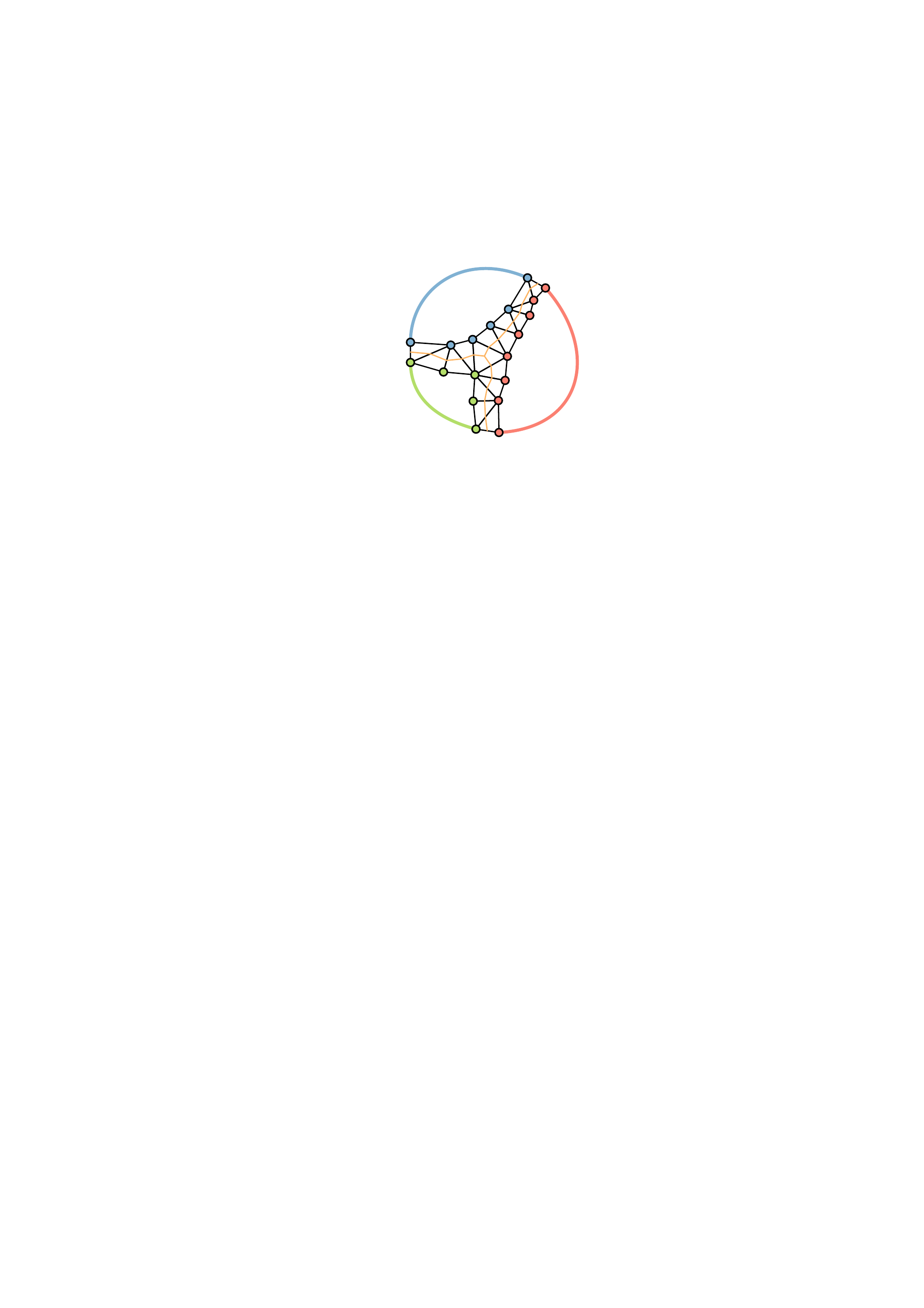} &
        \includegraphics{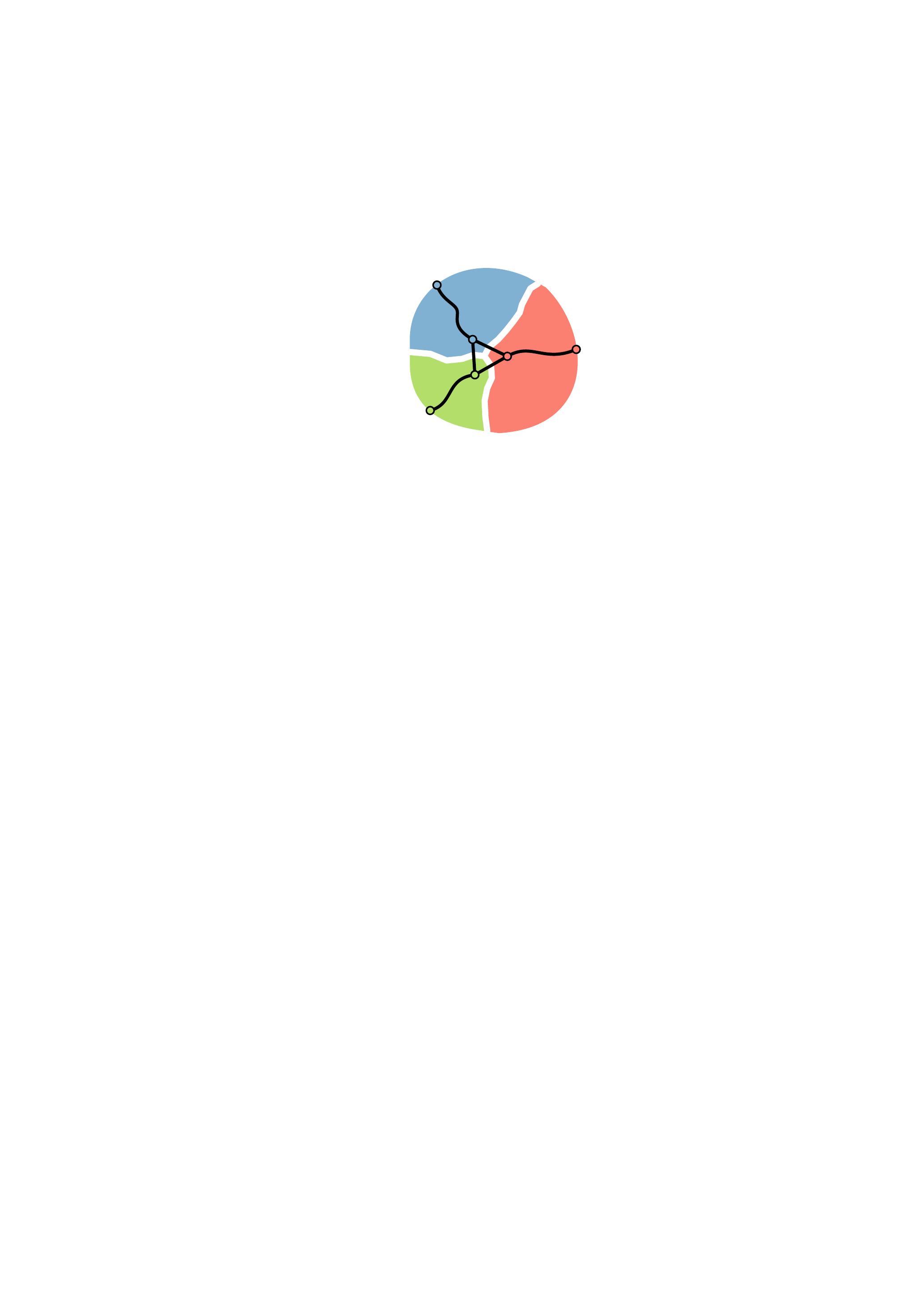} \\
        (a) & (b) & (c)
      \end{tabular}
    \end{center}
    \caption{The proof of \cref{lca_sperner}}
    \label{lca_sperner_fig}
  \end{figure}

  Refer to \cref{lca_sperner_fig}(b).
  Now, consider the embedded graph $X_0$ obtained as follows: For each $g\in V(\overline{T}_0)$, place a vertex on the center of each bichromatic edge of $g$ and, if $g$ is trichromatic, then place a vertex in the center of $g$.  Next,
   \begin{compactenum}
    \item add an edge joining the center of each trichromatic triangle to each of the centers of its bichromatic edges; and
    \item add an edge (embedded as a straight line segment) joining the centers of each pair of bichromatic edges that are on a common bichromatic face $g\in V(\overline{T}_0)$.
  \end{compactenum}
  The graph $X_0$ is a tree of maximum-degree $3$ that has $|I_f|$ leaves.  (Each leaf in $X_0$ is the center of a bichromatic edge in $E(f)$).  With the exception of these three leaves, every point in the embedding of $X_0$ is contained in the interior of $f$.

  Refer to \cref{lca_sperner_fig}(c).
  Now treat $X_0$ as a point set and consider the point set $f'$ obtained by removing $X_0$ from the closure of $f$. Now $f'$ has $|I_f|$ connected components and each vertex of $f_i$ is in a different component.  Each of the components of $f'$ contains vertices of $Y_j$ for exactly one $j\in I_f$; call this the \emph{colour} of the component.  Since no edge of $T$ crosses $X_0$, the colour of each vertex in $f_i$ is equal to the colour the component of $f'$ that contains it.

  Finally, to see that $f_0,\ldots,f_i$ is good first observe that we need only be concerned with the at most three faces in $F(G_i)\setminus F(G_{i-1})\setminus\{f_i\}$ and each of these shares a bichromatic edge with $f_i$.  If $g$ is a face in $F(G_i)\setminus F(G_{i-1})\setminus\{f_i\}$ with $E(g) \cap E(f_i) = \{uv\}$ and $uv$ is coloured with $a$ and $b$, then $V(g)\cap Y_j=\emptyset$ for any $j\in\{0,\ldots,i\}\setminus\{a, b, i\}$.  This completes the proof.
\end{proof}

\begin{thm}\label{three_tree_algorithm}
  There exists an $O(n)$ time algorithm that, given any $n$-vertex triangulation $G$ and any rooted spanning tree $T$ of $G$, produces a $(G,T)$-tripod decomposition $\mathcal{Y}$ such that $\tw(G/\mathcal{Y})\le 3$.
\end{thm}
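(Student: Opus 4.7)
The plan is to implement the recursive construction of \cref{orderings} iteratively, maintaining a queue $Q$ of faces of the current graph $G_{i-1}$ that still require subdivision and, at each iteration, choosing the next Sperner triangle $f_i$ via \cref{lca_sperner} in constant time.

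As preprocessing, compute a planar embedding of $G$ and the cotree $\overline{T}$ of $(G,T)$ rooted at $f_0$ in $O(n)$ time, and build an $O(1)$-query-time lowest common ancestor data structure for $\overline{T}$ in $O(n)$ time via any of the constructions cited in \cref{prelims}. Then compute $G_0 := Y_T(f_0)$ and initialize $Q$ with every bounded face of $G_0$ whose interior still contains a face of $G$. For each such face $f$, store its $|I_f|\le 3$ bichromatic edges and, for each bichromatic edge $e$, a pointer to the unique face of $G$ in the interior of $f$ that contains $e$; these pointers constitute exactly the set $S$ from \cref{lca_sperner}.

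At step $i$, dequeue a face $f$ together with its set $S$, invoke \cref{reconstruction} on $S$ in $\overline{T}$ to compute the model of the minimal subtree $\overline{T}_0$ spanning $S$ in $O(|S|^2)=O(1)$ time, and let $f_i$ be the vertex of this constant-sized model whose removal from $\overline{T}_0$ separates $S$ into singletons, identified by inspection. By \cref{lca_sperner}, appending $f_i$ preserves the properties \emph{proper} and \emph{good}. To materialize $Y_i$, maintain for every vertex $v$ of $G$ a flag $c(v)$ recording the index of its tripod, or $\bot$ while $v\notin V(G_{i-1})$; for each $u\in V(f_i)$, walk upward from $u$ in $T$ until reaching an ancestor $w$ with $c(w)\neq\bot$, setting $c(x):=i$ at every vertex $x$ visited strictly before $w$. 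Each vertex of $G$ lies in $Y_i$ for a unique $i$, so the total cost of all such walks over the algorithm's lifetime is $O(n)$. The three feet $w_1,w_2,w_3$ obtained this way lie on $f$, and the cyclic order of the crotch vertices $u_1,u_2,u_3$ around $f_i$ in the embedding determines the cyclic order of the feet on $f$; from this one reads off the at most three new faces of $G_i$ into which $f$ is split, together with their compact descriptions (at most three bichromatic edges and the corresponding $S$-pointers), in $O(1)$ time per new face. Each new face whose interior still contains a face of $G$ is enqueued.

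The main loop runs at most $|F(G)|-1\in O(n)$ times, so the total running time is $O(n)$ once the walking cost is amortized. Once $Q$ is empty, $f_0,\ldots,f_r$ is proper, good, and complete, so $\mathcal{Y}:=\{Y_0,\ldots,Y_r\}$ is a $(G,T)$-tripod decomposition, and the chordal-completion argument at the end of \cref{orderings} yields $\tw(G/\mathcal{Y})\le 3$. The main obstacle will be the bookkeeping in the third paragraph: to meet the constant-per-step budget one must verify that, given only the compact description of $f$, the Sperner triangle $f_i$, and its three feet, each new face of $G_i$, its bichromatic edges, and its $S$-pointers can all be read off in $O(1)$ time using nothing beyond the planar embedding of $G$ restricted to the interior of $f$.
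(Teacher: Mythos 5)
Your proposal follows the paper's proof in all essentials: LCA preprocessing on the cotree $\overline{T}$, constant-time Sperner-triangle location via the reconstruction lemma (\cref{reconstruction}) applied to the model of the minimal subtree spanning $S$, and amortized $O(n)$ walking in $T$ to mark tripod vertices. Replacing the paper's recursion with an explicit worklist is a cosmetic difference. Your worry in the final paragraph about passing compact face descriptions to subproblems is reasonable, but the paper treats the same point just as tersely (``we have also found the at most three bichromatic edges of $G_i$\ldots''); in both write-ups this comes out of the fact that the new bichromatic edges are precisely the transition edges at the feet of the legs of $Y_i$, which the walking step visits anyway.

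The one concrete gap is the treatment of subproblems with $S=\emptyset$. You initialize $Q$ with the bounded faces of $G_0$, every one of which is bounded entirely by $Y_0$, so $|I_f|=1$ and there are no bichromatic edges. Your single-rule selection (``the vertex of the model whose removal separates $S$ into singletons'') is vacuous when $S=\emptyset$, and \cref{lca_sperner} explicitly assumes $S$ non-empty. The paper handles this with a separate case: when the boundary cycle meets only one tripod, $f_i$ is taken to be \emph{any} inner face sharing an edge with the boundary, and when it meets exactly two, $f_i$ is either of the two faces incident to the bichromatic edges. Your argument needs this fallback, both for the initial faces of $G_0$ and for any later subproblem that happens to have $|I_f|\le 2$. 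With that case added, the proposal matches the paper's proof.
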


\begin{proof}
  Let $v_0$ be the root of $T$ and let $f_0$ be a face of $G$ incident to $v_0$ that contains an edge of $T$ incident to $v_0$.  In a preprocessing step, we compute the cotree $\overline{T}$ of $(G,T)$ and construct a lowest common ancestor data structure for $\overline{T}$ in $O(n)$ time that allows us to compute $\lca_{\overline{T}}(f,g)$ for any two faces $f,g\in F(G)$ in $O(1)$ time.

  After this preprocessing, we construct the good sequence $f_0,\ldots,f_{r}$ recursively. Conceptually, during any recursive invocation, the input is a near-triangulation $N$ bounded by a cycle $C$ in $G$ whose vertices belong to at most three tripods computed in previous steps.  Each vertex of $G$ starts initially \emph{unmarked} and we \emph{mark} a vertex once we have placed it in a tripod.  The precise input to a recursive invocation is defined as follows:
  \begin{enumerate}
    \item If $C$ intersects three tripods then the input consists of the three inner faces $g_1$, $g_2$, and $g_3$ of $N$ that contain bichromatic edges of $C$.  \cref{lca_sperner} characterizes the face $f_i$ in terms of the minimum subtree $\overline{T}_0$ of $\overline{T}$ that contains $g_1$, $g_2$, and $g_3$.  Indeed, $f_i$ is either the unique degree-$3$ node of $\overline{T}_0$ (if $g_1$, $g_2$, and $g_3$ are all leaves of $\overline{T}_0$) or $f_i$ is the unique node among $g_1$ $g_2$, or $g_3$ that has degree $2$.  By \cref{reconstruction} we can construct the model $\overline{T}_0'$ of $\overline{T}_0$ in constant time and find the node $f_i$.

    \item If $C$ intersects two tripods, then the input consists of two inner faces $g_1$, $g_2$, of $N$ with bichromatic edges of $C$ on their boundary.  In this case, we let $f_i:=g_1$ or $f_i=g_2$, either choice satisfies our requirements.

    \item If $C$ intersects only one tripod, then the input consists of any inner face $g_1$ of $N$ that contains an edge in $E(f)$.  In this case $f_i:=g_1$ satisfies our requirements.
\end{enumerate}
Once we have found the Sperner triangle $f_i$, we can compute the tripod $Y_i$ and mark its vertices by following the path in $T$ from each vertex of $f_i$ to its nearest marked ancestor in $T$.  This takes $O(1+|Y_i|)$ time.  Once we have done this, we have also found the at most three bichromatic edges of $G_i$ that are needed to perform the at most three recursive invocations on the near triangulations whose outer faces coincide with each of the new faces in $F(G_i)\setminus F(G_{i-1})\setminus\{f_i\}$.

After setting $f_0$, the initial recursive call falls into the third case above, so its input is any of the three inner faces that shares an edge with the outer face, $f_0$.  Each recursive invocation adds a new face $f_i$ to the good face sequence $f_0,\ldots,f_{r}$ and takes $O(1+|Y_i|)$ time.  Since $Y_0,\ldots,Y_{r}$ is a partition of $V(G)$, the running time of this algorithm is therefore $\sum_{i=0}^{r} O(1+|Y_i|) = O(n)$.
\end{proof}

\section{Variations}
\label{variants}

In this section we show that there are $O(n)$ time algorithms for computing the decompositions in \cref{meta}.\ref{four_tree} and \cref{meta}.\ref{six_tree}.  In the same way that \cref{meta}.\ref{three_tree} follows from the tripod decomposition of \cref{tripod_decomposition}, \cref{meta}.\ref{four_tree} follows from a bipod decomposition given by \cref{bipod_decomposition_algorithm} and \cref{meta}.\ref{six_tree} follows from a monopod decomposition given by \cref{monopod_decomposition_algorithm}.


%

\subsection{Bipod Decompositions}

We begin with the decomposition in \cref{meta}.\ref{four_tree}, which was communicated to us by Vida~Dujmović, and has not appeared before.  This decomposition is obtained by selecting a proper sequence $\mathcal{E}:=e_0,\ldots,e_k$ of distinct edges of $G$, which define a sequence of graphs $\mathcal{G_E}:=G_{0},\ldots,G_k$ where $G_i:=\bigcup_{j=0}^i P_T(e_j)$ and a sequence of \emph{bipods} $\mathcal{I_E}:=\Lambda_0,\ldots,\Lambda_k$ where $\Lambda_i=V(G_i)\setminus V(G_{i-1})$.  We call $\mathcal{E}$ \emph{good} if, for each $i\in\{0,\ldots,k\}$ and each face $f\in F(G_i)$, $V(f)$ has a non-empty intersection with at most $4$ bipods in $\Lambda_0,\ldots,\Lambda_i$.

Exactly the same argument used in \cref{orderings} to show that $G/\mathcal{Y_F}$ is contained in a chordal graph of maximum clique size $4$ also shows that if $\mathcal{E}$ is a good edge sequence that produces a bipod partition $\mathcal{I_E}$ of $V(G)$, then $G/\mathcal{I_E}$ is contained in a chordal graph of maximum clique size $5$, so $G/\mathcal{I_E}$ has treewidth at most $4$.

We now explain why a good edge sequence $e_0,\ldots,e_r$ exists.\footnote{The existence of this edge sequence is more easily proven using Sperner's Lemma, but we want a proof that lends itself to a linear time algorithm.}  As before, we set $f_0$ to be any face of $G$ such that $E(f_0)$ contains an edge of $T$ incident to the root $v_0$ of $T$.  The edge $e_0$ is any edge of $E(f_0)\setminus E(T)$.  Next we take special care to ensure that $G_i$ is biconnected for $i\ge 1$.  In particular, if $G_0$ contains only two edges of $f_0$, then we take $e_1$ to be the edge of $f_0$ that does not appear in $G_0$.  Otherwise, we choose $e_1$ using the general strategy for choosing $e_i$, described next.

Refer to \cref{e_i}.  Now we may assume that $G_{i-1}$ is biconnected. To choose the edge $e_i$, we consider any face $f\in F(G_{i-1})\setminus F(G)$. Inductively, $V(f)$ contains vertices from at most four bipods in $\Lambda_0,\ldots,\Lambda_{i-1}$. Let $I_f:=\{j\in\{0,\ldots,i-1\}:\Lambda_j\cap V(f)\neq\emptyset\}$. If $|I_f| < 4$ then we can select $e_i$ to be any edge in the interior $f$. Therefore, we focus on the case $|I_f| = 4$. As before we colour vertices in the near triangulation $N$ using colours in the set $I_f$; we let $S$ be the set of inner faces in $N$ that contain a bichromatic edge in $E(f)$; and let $\overline{T}_0$ be the minimal subtree of $\overline{T}$ that spans $S$.  The same argument in the proof of \cref{lca_sperner} shows that every node of $\overline{T}_0$ is contained in $f$.

\cref{director}, below, shows that $\overline{T}_0$ contains an edge $xy$ such that each component of $\overline{T}_0-xy$ contains at most two elements of $S$.  It is straightforward to verify that, if we choose $e_i$ to the be the edge in $E(x)\cap E(y)$ then we obtain a graph $G_i$ in which each of the two new faces containing vertices from $\Lambda_i$ contains vertices from at most three bipods in $\{\Lambda_j:j\in I_f\}$, as required.

\begin{figure}[htbp]
  \begin{center}
    \begin{tabular}{cc}
       \includegraphics{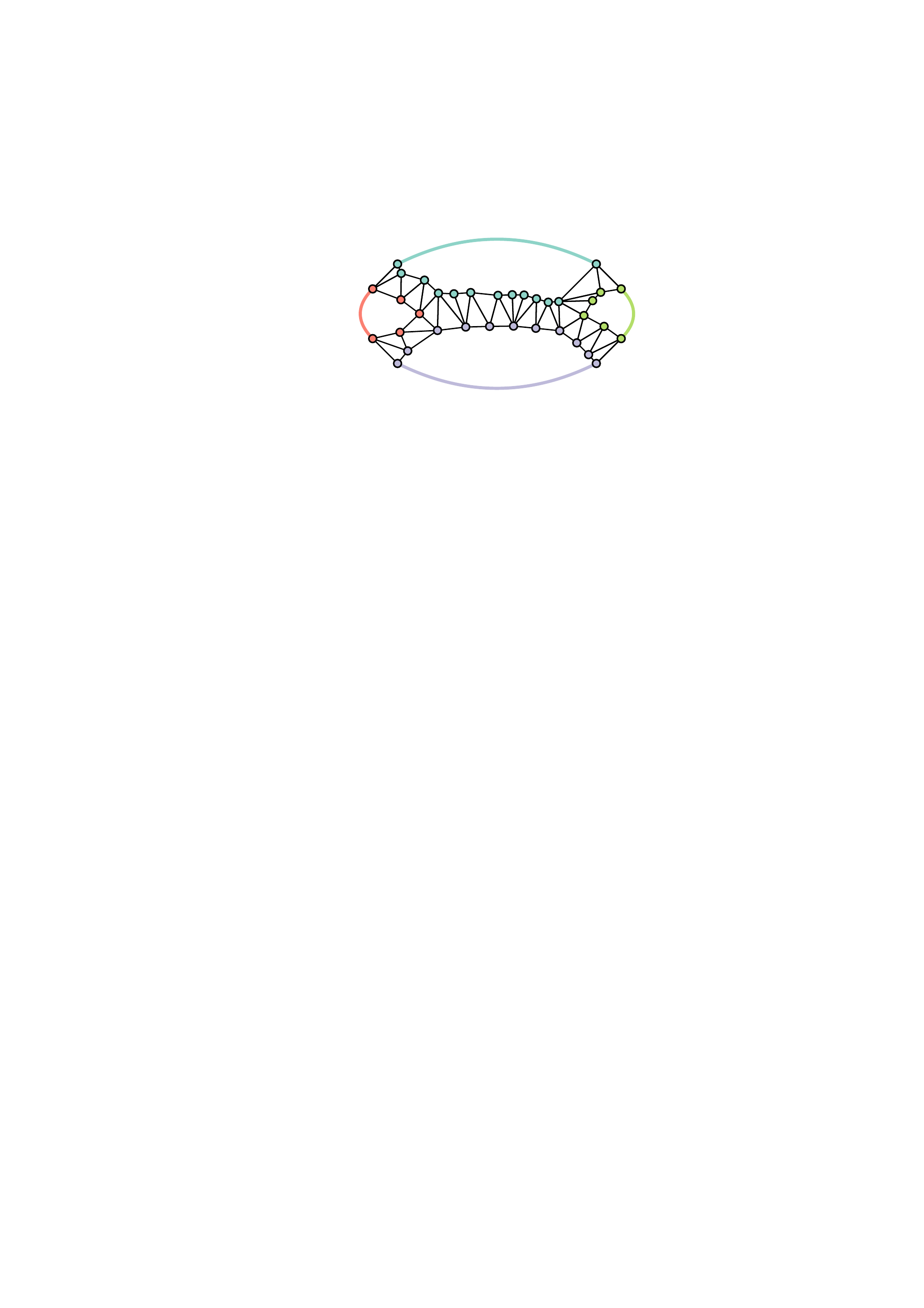} &
       \includegraphics{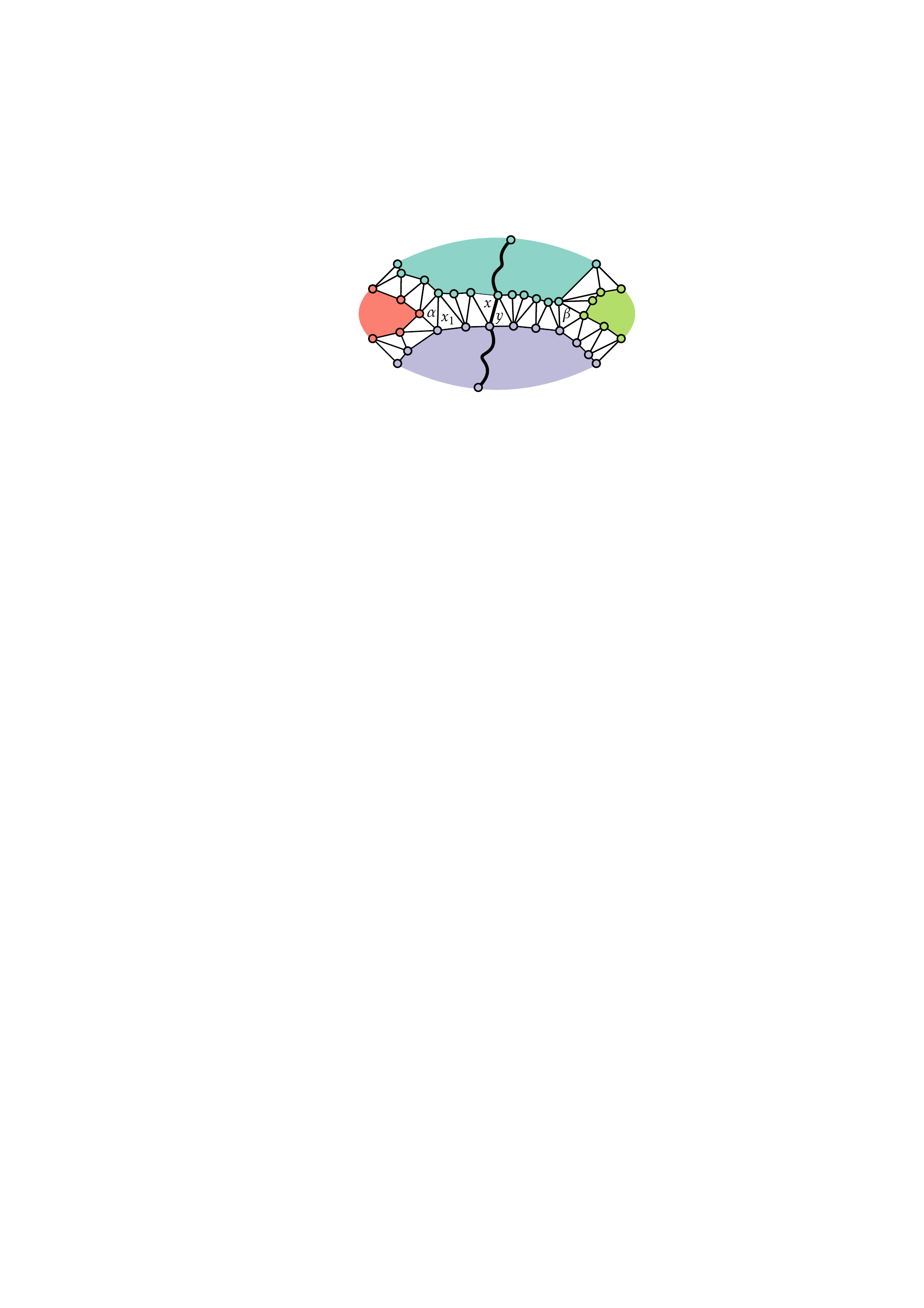}
     \end{tabular}
  \end{center}
  \caption{Choosing the next $e_i$ in a  good edge sequence.}
  \label{e_i}
\end{figure}

\begin{clm}\label{director}
  $\overline{T}_0$ contains an edge $xy$ such that each component of $\overline{T}_0-xy$ contains at most two nodes of $S$.
\end{clm}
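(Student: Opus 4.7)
The plan is to derive the claim from three structural facts about $\overline{T}_0$. First, since $\overline{T}$ is a binary tree, $\overline{T}_0$ has maximum degree at most $3$. Second, by minimality of $\overline{T}_0$, every leaf of $\overline{T}_0$ lies in $S$, and every edge of $\overline{T}_0$ splits $S$ into two nonempty parts. Third---and this is the crucial observation---every $g\in S$ has degree at most $2$ in $\overline{T}_0$: the triangle $g$ has one edge on the cycle $f$, and the face of $G$ on the other side of that edge lies outside the interior of $f$; since every node of $\overline{T}_0$ lies in the interior of $f$ (by the same argument used in the proof of \cref{lca_sperner}), at most two of the three $\overline{T}$-neighbours of $g$ can appear in $\overline{T}_0$. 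In particular, every degree-$3$ node of $\overline{T}_0$ lies outside $S$.

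Since $E(f)$ contains exactly $|I_f|=4$ bichromatic edges and each element of $S$ contains at least one of them, $|S|\le 4$. When $|S|\le 3$, the second observation implies that any edge of $\overline{T}_0$ produces a split $(k_1,k_2)$ with $k_1,k_2\ge 1$ and $k_1+k_2\le 3$, so $\max(k_1,k_2)\le 2$ and we are done. When $|S|=4$, I proceed by case analysis on the number $\ell$ of leaves of $\overline{T}_0$, using the identity $n_3=\ell-2$ (where $n_3$ denotes the number of degree-$3$ nodes of $\overline{T}_0$), which holds in any tree of maximum degree $3$.

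In the subcase $\ell=4$, we have $n_3=2$; by the third observation, the two degree-$3$ nodes $x,y$ lie outside $S$, so the four elements of $S$ are precisely the four leaves, and removing any edge on the path from $x$ to $y$ in $\overline{T}_0$ yields two components each containing exactly two leaves. In the subcase $\ell=3$, we have $n_3=1$; let $c$ be the unique degree-$3$ node, so that $c\notin S$, and let $u$ be the unique internal element of $S$. Since $u$ has degree~$2$, it lies on the path in $\overline{T}_0$ from $c$ to one of the three leaves $l$; removing the edge of that path that is incident to $u$ on the $c$-side produces a $(2,2)$-split, namely $\{u,l\}$ versus the other two leaves. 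In the subcase $\ell=2$, $\overline{T}_0$ is a path whose two endpoints lie in $S$ and whose interior contains the remaining two elements of $S$; removing any edge between these two interior $S$-nodes yields a $(2,2)$-split.

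The main obstacle is the $\ell=3$ case. Without the third observation, one could have $c\in S$, producing a $Y$-shaped $\overline{T}_0$ whose elements of $S$ are $c$ together with its three leaves; every edge of this tree would yield a $(1,3)$-split, and no $(2,2)$-split edge would exist. The third observation is exactly what forbids this configuration, and it is therefore the key ingredient of the proof.
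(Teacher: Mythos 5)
Your proof is correct. You establish the same key structural fact that the paper uses --- that every $g\in S$ has degree at most $2$ in $\overline{T}_0$ (because its edge on $E(f)$ leads outward, and all of $\overline{T}_0$ lies inside $f$), so degree-$3$ nodes of $\overline{T}_0$ lie outside $S$ --- but you then complete the argument by a different route. The paper directs each edge $xy$ of $\overline{T}_0$ toward whichever side of $\overline{T}_0 - xy$ contains $\ge 3$ elements of $S$, assumes for contradiction that every edge receives a direction, locates a sink $x$, and shows (splitting on $\deg(x)\in\{1,2,3\}$, using $x\notin S$ in the degree-$3$ case) that this forces $|S|>4$. Your approach instead enumerates the possible shapes of $\overline{T}_0$ by its leaf count $\ell\in\{2,3,4\}$ via the identity $n_3=\ell-2$ and, in each subcase, exhibits the splitting edge explicitly. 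Both hinge on the same observation; yours is more constructive and gives a concrete picture of where the edge lives (on the path between the two degree-$3$ nodes, adjacent to the unique internal $S$-node, or between the two interior $S$-nodes on the path), while the paper's sink argument is shorter and scales more cleanly to the analogous bound $\lfloor |S|/2\rfloor$ used later for monopods with $|S|\le 5$. One small point of rigour worth tightening: when you say ``at most two of the three $\overline{T}$-neighbours of $g$ can appear in $\overline{T}_0$,'' note that $g$ may not have three $\overline{T}$-neighbours at all if some of its boundary edges lie in $T$; but the conclusion $\deg_{\overline{T}_0}(g)\le 2$ holds either way (if the edge of $g$ on $f$ is a tree edge, $g$ already has $\overline{T}$-degree $\le 2$; otherwise its $\overline{T}$-neighbour across that edge lies outside $f$).
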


\begin{proof}
  Direct each edge $xy$ of $\overline{T}_0$ in the direction $\overrightarrow{xy}$ if the component of $\overline{T}_0-xy$ that contains $y$ contains three or more nodes of $S$.  It is sufficient to show that this process leaves some edge $xy$ of $\overline{T}_0$ undirected.  Assume for the sake of contradiction that every edge of $\overline{T}_0$ is directed.  Then some node $x$ of $\overline{T}_0$ has only incoming edges. Certainly $x$ does not have degree $1$ in $\overline{T}_0$.

  If $x$ has degree $2$ in $\overline{T}_0$ then $\overline{T}_0$ contains two subtrees $T_1$ and $T_2$ that have only the node $x$ in common and such that $|V(T_1)\cap S|\ge 3$ and $|V(T_2)\cap S|\ge 3$, which implies that $|S|\ge 3+3-1 > 4$, a contradiction.

  Suppose therefore that $x$ has degree $3$ in $\overline{T}_0$.  Each face in $S$ contains an edge in $E(f)$, so each face in $S$ has degree at most $2$ in $\overline{T}_0$.  Therefore $x\not\in S$.  Therefore $\overline{T}_0-x$ contains three components $T_1, T_2, T_3$ such that each pair of components contains at least $3$ elements of $S$.  But this implies that $|S|\ge (3\times 3)/2>4$, a contradiction.
\end{proof}

Algorithmically, using \cref{reconstruction}, we can construct the model $\overline{T}_0'$ of $\overline{T}_0$ in constant time given the set $S$.  The model $\overline{T}_0'$ will also contain an edge $\alpha\beta$ such that each component of $\overline{T}_0'-\alpha\beta$ contains at most two nodes in $S$. We claim that $E(\alpha)$  contains an edge that makes a suitable choice for $e_i$, and this edge can be found in constant time.  Indeed, the edge $\alpha\beta$ in  $\overline{T}_0'$ corresponds to a path $\alpha,x_1,\ldots,x_k,\beta$ in $\overline{T}_0$ and the unique edge in $E(\alpha)\cap E(x_1)$ is a suitable choice for $e_i$.  

The rest of the details of the algorithm are similar to those given in the proof of \cref{three_tree_algorithm}:  Each subproblem is a near-triangulation $N$ bounded by a cycle $C$ and the input that defines the subproblem consists of the (at most four) faces $S\subseteq F(N)$ incident to bichromatic edges of $C$.\footnote{In the degenerate case where $C$ has no bichromatic edges, the input is any face of $N$ incident to an edge of $C$.}

\begin{thm}\label{bipod_decomposition_algorithm}
  There exists an $O(n)$ time algorithm that, given any $n$-vertex triangulation $G$ and any rooted spanning tree $T$ of $G$, produces a $(G,T)$-bipod decomposition $\mathcal{I}$ such that $\tw(G/\mathcal{I})\le 4$.
\end{thm}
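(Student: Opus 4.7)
The plan is to adapt the linear-time algorithm of \cref{three_tree_algorithm} to construct a good edge sequence $e_0,\ldots,e_k$ (in place of a good face sequence), yielding the bipod decomposition $\mathcal{I}=\{\Lambda_0,\ldots,\Lambda_k\}$ defined above. In a preprocessing phase I would build the cotree $\overline{T}$ of $(G,T)$ and an $O(1)$-query lowest common ancestor data structure on $\overline{T}$, all in $O(n)$ time. I then choose $f_0$ to be a face incident to $v_0$ containing a $T$-edge at $v_0$, take $e_0$ to be any edge of $E(f_0)\setminus E(T)$, and, if necessary, take $e_1$ to be the remaining edge of $f_0$ so that $G_1$ is $2$-connected.

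The recursion follows the same template as in \cref{three_tree_algorithm}: each subproblem is a near-triangulation $N$ bounded by a cycle $C$, specified by the set $S\subseteq F(G)$ of at most four inner faces of $N$ incident to a bichromatic edge of $C$. The easy cases $|S|\le 3$ are handled by picking any interior edge of the corresponding face $f$ of $G_{i-1}$, exactly as in the three analogous cases of \cref{three_tree_algorithm}. The new ingredient is the case $|S|=4$: I would feed $S$ and $\overline{T}$ to \cref{reconstruction} to build the model $\overline{T}_0'$ in $O(|S|^2)=O(1)$ time, scan its $O(1)$ edges to find the splitting edge $\alpha\beta$ guaranteed by \cref{director}, and finally read off $e_i$ as the unique edge in $E(\alpha)\cap E(x_1)$, where $x_1$ is the neighbour of $\alpha$ on the $\alpha$-to-$\beta$ path in $\overline{T}_0$. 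Choosing $e_i$ in this way splits $S$ into two subsets of size at most $2$, which together with the newly created bichromatic edges determine the (at most two) recursive calls; by \cref{director} and the bichromatic-colouring argument of \cref{lca_sperner}, each face of $G_i$ touches at most four bipods, so the invariant $|I_f|\le 4$ is preserved.

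Given $e_i$, I compute $\Lambda_i$ by walking up $T$ from each endpoint of $e_i$ to the nearest marked ancestor, marking each newly visited vertex; this costs $O(1+|\Lambda_i|)$ time. Summing over recursive calls yields total running time $\sum_i O(1+|\Lambda_i|)=O(n)$, since the bipods partition $V(G)$, and the bound $\tw(G/\mathcal{I})\le 4$ follows from the good-edge-sequence argument sketched before the theorem.

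The main obstacle, the existence of the splitting edge $\alpha\beta$, is already resolved by \cref{director}; on the algorithmic side, the only point requiring real care beyond a direct port of \cref{three_tree_algorithm} is checking that a splitting edge of the model $\overline{T}_0'$ transports back to a splitting edge of $\overline{T}_0$ and then to the correct dual edge of $G$, and that this bookkeeping runs in constant time from the data returned by \cref{reconstruction}.
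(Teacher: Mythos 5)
Your proposal tracks the paper's proof of \cref{bipod_decomposition_algorithm} essentially step for step: the same $O(n)$ preprocessing (cotree $\overline{T}$ plus constant-time LCA), the same initial choices of $f_0$, $e_0$, $e_1$ to guarantee biconnectivity, the same recursion on near-triangulations carrying the set $S$ of Sperner faces, the same use of \cref{reconstruction} to build $\overline{T}_0'$ in $O(1)$ time, the same appeal to \cref{director} for the splitting edge $\alpha\beta$, the same recovery of $e_i$ from $E(\alpha)\cap E(x_1)$, and the same charging argument $\sum_i O(1+|\Lambda_i|)=O(n)$ together with the chordal-graph argument for $\tw(G/\mathcal{I})\le 4$. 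One tiny bookkeeping point: the paper's case distinction is on $|I_f|$ (the number of bipods meeting the current face) rather than on $|S|$; these agree whenever each bichromatic boundary edge lies on a distinct inner face, which is the situation both you and the paper implicitly assume, so this does not change the algorithm.
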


\subsection{Monopod Decompositions}

Finally we consider the decomposition described in \cref{meta}.\ref{six_tree}.   This decomposition is obtained from a tripod decomposition $\mathcal{Y}:=Y_0,\ldots,Y_r$, obtained by a sequence $\mathcal{F}:=f_0,\ldots,f_r$ of faces of $G$ in the same manner described in \cref{orderings}.  However in this setting, the sequence $f_0,\ldots,f_r$ is \emph{good} if, for each $i\in\{0,\ldots,r\}$ and each face $f$ of $G_i:=\bigcup_{j=0}^i Y_T(f_j)$, $V(f)$ contains vertices from at most $5$ \emph{legs} of tripods in $Y_{0},\ldots,Y_i$.  Under these conditions, \citet{ueckerdt.wood.ea:improved} are able to show that the \emph{monopod decomposition} $\mathcal{I}$ obtained by splitting each tripod $Y_i$ into three upward paths yields a quotient graph $G/\mathcal{I}$ of treewidth at most $6$.

As before we focus on the extreme case when $V(f)$ contains vertices from exactly 5 legs of tripods. Refer to \cref{uwy}. Following the same strategy used for the previous two decompositions, the set $S$ in this case has size at most $5$ and the face $f_i$ corresponds to a node of $\overline{T}_0$ such that each component of $\overline{T}_0-f_i$ contains at most $2$ nodes in $S$.  (This is always possible because $\lfloor 5/2\rfloor=2$.)  Again, a suitable choice for $f_i$ can be found in the model $\overline{T}_0'$ of $\mathcal{T}_0$ in constant time.

\begin{figure}[htbp]
  \begin{center}
    \begin{tabular}{cc}
       \includegraphics{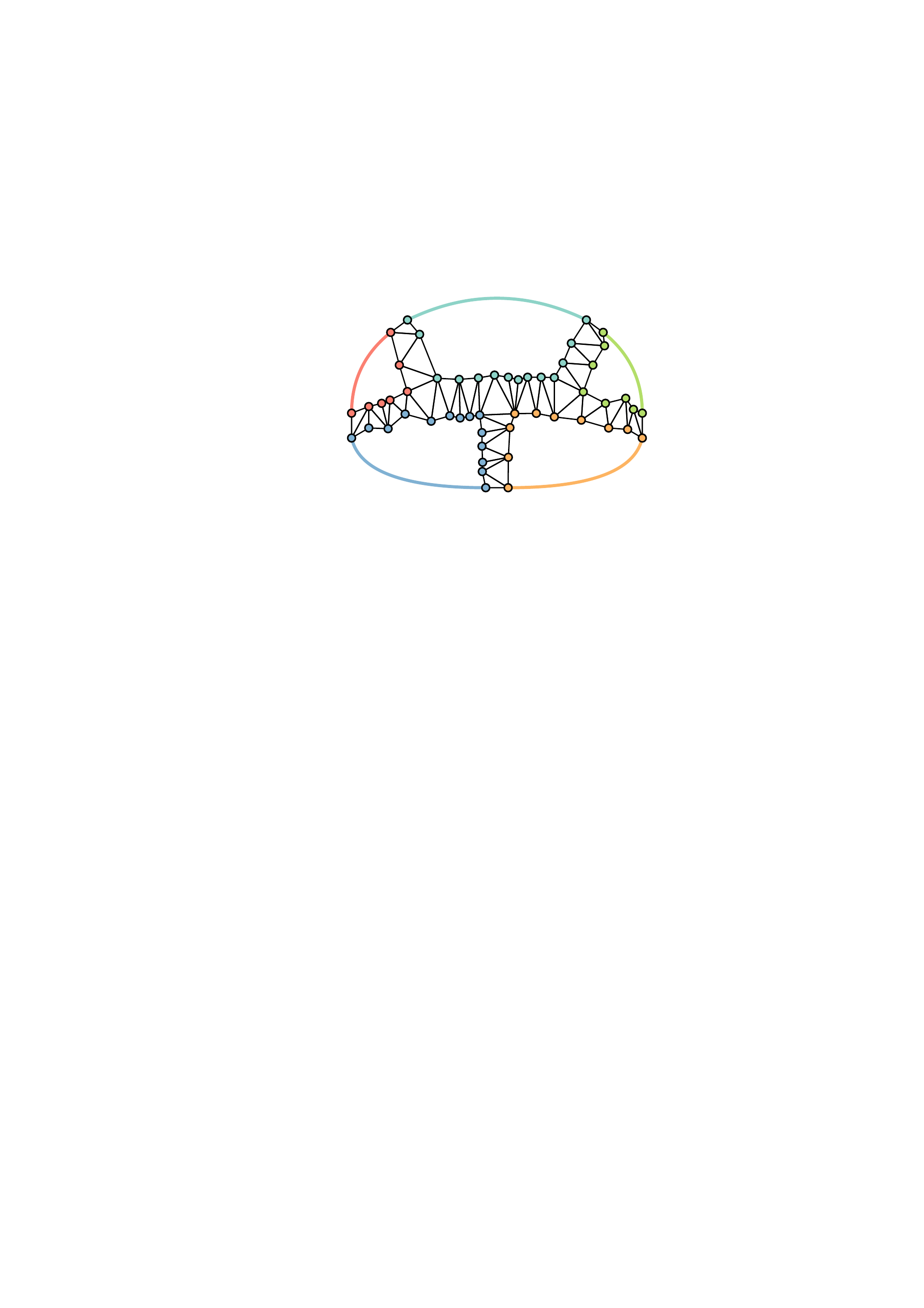} &
       \includegraphics{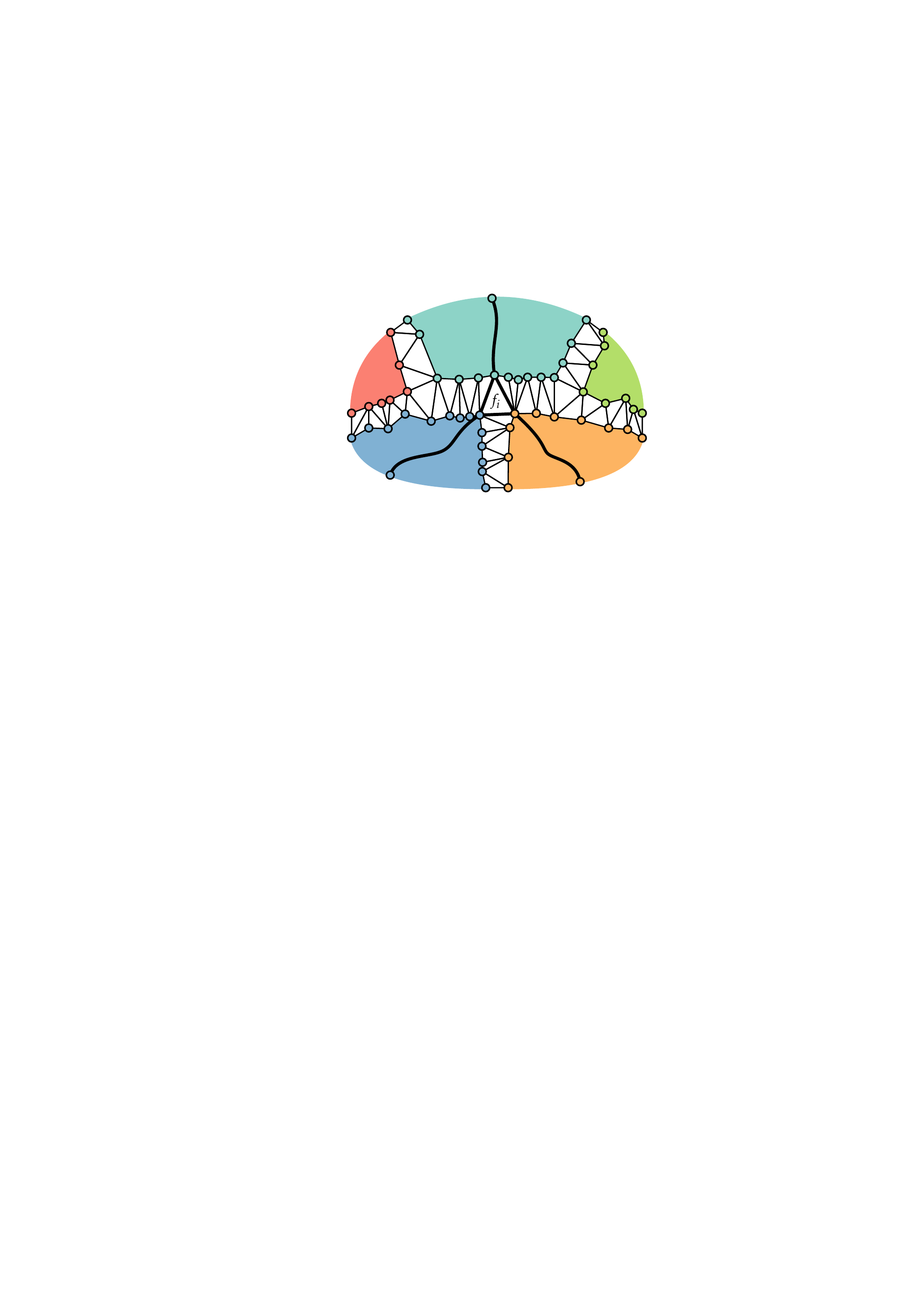}
     \end{tabular}
  \end{center}
  \caption{The selection of a tripod by \citet{ueckerdt.wood.ea:improved}}
  \label{uwy}
\end{figure}

\begin{thm}\label{monopod_decomposition_algorithm}
  There exists an $O(n)$ time algorithm that, given any $n$-vertex triangulation $G$ and any rooted spanning tree $T$ of $G$, produces a $(G,T)$-monopod decomposition $\mathcal{I}$ such that $\tw(G/\mathcal{I})\le 6$.
\end{thm}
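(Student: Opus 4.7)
The plan is to follow the same template used for \cref{three_tree_algorithm} and \cref{bipod_decomposition_algorithm}, specialized to the parameters for monopod decompositions.  As preprocessing, compute the cotree $\overline{T}$ of $(G,T)$ rooted at a face $f_0$ of $G$ incident to the root $v_0$ of $T$ and containing an edge of $T$, and build an $O(n)$-time, $O(1)$-query lowest common ancestor structure for $\overline{T}$.  The main loop will build a good face sequence $f_0,f_1,\ldots,f_r$ (in the sense defined in this subsection, where every face of every $G_i$ meets at most $5$ legs of tripods in $Y_0,\ldots,Y_i$) by a divide-and-conquer recursion driven by the cotree, and then return the monopod decomposition $\mathcal{I}$ obtained by splitting each tripod $Y_i$ into its at most three upward-path legs.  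The treewidth bound $\tw(G/\mathcal{I})\le 6$ is immediate from the argument of \citet{ueckerdt.wood.ea:improved} once a good face sequence is in hand, so the only thing to verify is the time bound.

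Each recursive subproblem corresponds to a near-triangulation $N$ bounded by a cycle $C$ of some current $G_{i-1}$, with the input being the (at most five) inner faces $S\subseteq F(N)$ whose boundary contains an edge of $C$ joining vertices of two distinct previously-constructed legs.  By the same colouring argument given in the proof of \cref{lca_sperner}, extended from three to five colours (one per leg meeting $C$), the set $S$ lies in a single subtree of $\overline{T}$ restricted to the interior of $f$, and this subtree is exactly the minimal subtree $\overline{T}_0$ of $\overline{T}$ spanning $S$.  I would then invoke \cref{reconstruction} with $d=|S|\le 5$ to build, in $O(d^2)=O(1)$ time, the model $\overline{T}_0'$ of $\overline{T}_0$ with respect to $S$, using the LCA data structure from preprocessing.

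The key combinatorial step is to select the next face $f_i$ from $\overline{T}_0$.  I would choose $f_i$ to be any node of $\overline{T}_0$ such that every component of $\overline{T}_0-f_i$ contains at most $\lfloor 5/2\rfloor=2$ elements of $S$; such a node is just a centroid of $\overline{T}_0'$ with respect to the weight function that puts unit weight on $S$, and by the same pigeonhole argument used in the proof of \cref{director} (now with threshold $3$ rather than $3$, but with only $5$ leaves to distribute) such a centroid exists.  Since $\overline{T}_0'$ has constant size, $f_i$ together with the corresponding node of $\overline{T}_0$ (recovered from the constant-length contracted path of $\overline{T}_0'$) can be read off in $O(1)$ time; the constant-colour version of the argument in \cref{lca_sperner} then shows that each of the at most $\le 5$ new faces in $F(G_i)\setminus F(G_{i-1})\setminus\{f_i\}$ meets at most $5$ legs, so the good-sequence invariant is maintained.

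With $f_i$ in hand, the tripod $Y_i$ is produced by walking from each vertex of $f_i$ up $T$ until hitting a marked ancestor, marking each new vertex; this costs $O(1+|Y_i|)$.  The set $S$ for each of the (constantly many) recursive subproblems can be read off from the bichromatic edges of the new faces in constant time, so the total work is $\sum_{i=0}^{r} O(1+|Y_i|)=O(n)$ since the $Y_i$ partition $V(G)$.  The main obstacle, and the point where I expect the proof to require the most care, is verifying that a centroid-like choice of $f_i$ really does leave at most $5$ legs on each newly created face --- the five-colour analogue of the bichromatic-edge bookkeeping from \cref{lca_sperner} --- but this is exactly what \citet{ueckerdt.wood.ea:improved} establish, and \cref{reconstruction} plus the constant size of $S$ turn their existence proof into a constant-time subroutine.
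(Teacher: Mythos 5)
Your proposal is essentially correct and follows the same route as the paper: obtain a tripod decomposition via a cotree-LCA-driven divide-and-conquer, choosing each Sperner triangle $f_i$ as a weighted centroid of the model $\overline{T}_0'$ with respect to the (at most five) faces in $S$, and then split each tripod into its legs to obtain the monopod decomposition, with the treewidth bound delegated to \citet{ueckerdt.wood.ea:improved}. The only slips are cosmetic: there are at most three (not five) new faces in $F(G_i)\setminus F(G_{i-1})\setminus\{f_i\}$, and the aside about ``recovering'' a node from a contracted path of $\overline{T}_0'$ is unnecessary here since a node of the model is already a node of $\overline{T}_0$ (that recovery step is only needed for the edge selection in the bipod case).
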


\section*{Acknowledgement}

This research was initiated at the BIRS 21w5235 Workshop on Graph Product Structure Theory, held November 21--26, 2021 at the Banff International Research Station.  The authors are grateful to the workshop organizers and participants for providing a stimulating research environment.  We are especially grateful to Vida~Dujmović for sharing \cref{meta}.\ref{four_tree} with us.

\bibliographystyle{plainurlnat}
\bibliography{ps2}

\end{document}